%
%
%
%
%
%
%
\documentclass[%
 reprint,
 superscriptaddress,
 longbibliography,
 amsmath,amssymb,
 aps,
 pra,
]{revtex4-1}



\usepackage[utf8]{inputenc} 
\usepackage{amsmath,amsfonts,amssymb,mathrsfs,amsthm,cases} 
\usepackage{bm, bbm, dsfont} 
\usepackage{physics} 
\usepackage{graphicx} 
\usepackage{diagbox,threeparttable,dashrule,booktabs,dcolumn} 
\usepackage{xcolor} 
\usepackage{url} 
\usepackage{hyperref} 
\hypersetup{colorlinks,linkcolor={blue},citecolor={blue},urlcolor={red}}
\usepackage{cleveref} 




\crefname{equation}{Eq.}{Eqs.}
\crefname{section}{Sec.}{Secs.}
\crefname{definition}{Definition}{Definitions}
\crefname{proposition}{Proposition}{Propositions}
\crefname{lemma}{Lemma}{Lemmas}
\crefname{theorem}{Theorem}{Theorems}
\crefname{corollary}{Corollary}{Corollaries}
\crefname{conjecture}{Conjecture}{Conjectures}
\crefname{claim}{}{Claims}
\crefname{example}{Example}{Examples}


\newtheorem{definition}{Definition}

\newtheorem{lemma}[definition]{Lemma}

\newtheorem{theorem}[definition]{Theorem}
\newtheorem{corollary}[definition]{Corollary}
\newtheorem{conjecture}[definition]{Conjecture}

\newtheorem{example}[definition]{Example}



\newcommand{\proj}[1]{|{#1}\rangle \langle {#1}|}

\newcommand{\nc}{\newcommand}



\def\red{\textcolor{red}}

\def\bpf{\begin{proof}}
\def\epf{\end{proof}}
\def\bea{\begin{eqnarray}}
\def\eea{\end{eqnarray}}
\def\beq{\begin{equation}}
\def\eeq{\end{equation}}
\def\bal{\begin{aligned}}
\def\eal{\end{aligned}}
\def\bma{\begin{pmatrix}}
\def\ema{\end{pmatrix}}


\def\dg{\dagger}

\def\ox{\otimes}

\def\diag{\mathop{\rm diag}}


\def\a{\alpha}
\def\b{\beta}
\def\g{\gamma}

\def\e{\epsilon}

\def\t{\theta}

\def\la{\lambda}




\def\G{\Gamma}


\nc{\bbA}{{\mathbb A}}  \nc{\bbB}{{\mathbb B}}  \nc{\bbC}{{\mathbb C}}
\nc{\bbD}{{\mathbb D}}  \nc{\bbE}{{\mathbb E}}  \nc{\bbF}{{\mathbb F}}
\nc{\bbG}{{\mathbb G}}  \nc{\bbH}{{\mathbb H}}  \nc{\bbI}{{\mathbb I}}
\nc{\bbJ}{{\mathbb J}}  \nc{\bbK}{{\mathbb K}}  \nc{\bbL}{{\mathbb L}}
\nc{\bbM}{{\mathbb M}}  \nc{\bbN}{{\mathbb N}}  \nc{\bbO}{{\mathbb O}}
\nc{\bbP}{{\mathbb P}}  \nc{\bbQ}{{\mathbb Q}}  \nc{\bbR}{{\mathbb R}}
\nc{\bbS}{{\mathbb S}}  \nc{\bbT}{{\mathbb T}}  \nc{\bbU}{{\mathbb U}}
\nc{\bbV}{{\mathbb V}}  \nc{\bbW}{{\mathbb W}}  \nc{\bbX}{{\mathbb X}}
\nc{\bbY}{{\mathbb Y}}  \nc{\bbZ}{{\mathbb Z}}  

\nc{\bA}{{\bf A}}  \nc{\bB}{{\bf B}}  \nc{\bC}{{\bf C}}
\nc{\bD}{{\bf D}}  \nc{\bE}{{\bf E}}  \nc{\bF}{{\bf F}}
\nc{\bG}{{\bf G}}  \nc{\bH}{{\bf H}}  \nc{\bI}{{\bf I}}
\nc{\bJ}{{\bf J}}  \nc{\bK}{{\bf K}}  \nc{\bL}{{\bf L}}
\nc{\bM}{{\bf M}}  \nc{\bN}{{\bf N}}  \nc{\bO}{{\bf O}}
\nc{\bP}{{\bf P}}  \nc{\bQ}{{\bf Q}}  \nc{\bR}{{\bf R}}
\nc{\bS}{{\bf S}}  \nc{\bT}{{\bf T}}  \nc{\bU}{{\bf U}}
\nc{\bV}{{\bf V}}  \nc{\bW}{{\bf W}}  \nc{\bX}{{\bf X}}
\nc{\bY}{{\bf Y}}  \nc{\bZ}{{\bf Z}}  

\nc{\cA}{{\cal A}}  \nc{\cB}{{\cal B}}  \nc{\cC}{{\cal C}}
\nc{\cD}{{\cal D}}  \nc{\cE}{{\cal E}}  \nc{\cF}{{\cal F}}
\nc{\cG}{{\cal G}}  \nc{\cH}{{\cal H}}  \nc{\cI}{{\cal I}}
\nc{\cJ}{{\cal J}}  \nc{\cK}{{\cal K}}  \nc{\cL}{{\cal L}}
\nc{\cM}{{\cal M}}  \nc{\cN}{{\cal N}}  \nc{\cO}{{\cal O}}
\nc{\cP}{{\cal P}}  \nc{\cQ}{{\cal Q}}  \nc{\cR}{{\cal R}}
\nc{\cS}{{\cal S}}  \nc{\cT}{{\cal T}}  \nc{\cU}{{\cal U}}
\nc{\cV}{{\cal V}}  \nc{\cW}{{\cal W}}  \nc{\cX}{{\cal X}}
\nc{\cY}{{\cal Y}}  \nc{\cZ}{{\cal Z}}  


\def\ox{\otimes}

\def\dg{\dagger}




\begin{document}


\title{The detection power of real entanglement witnesses under local unitary equivalence} 

\author{Yi Shen}
\email[]{yishen@jiangnan.edu.cn}
\affiliation{School of Science, Jiangnan University, Wuxi Jiangsu 214122, China}

\author{Lin Chen}
\email[Corresponding author: ]{linchen@buaa.edu.cn}
\affiliation{LMIB(Beihang University), Ministry of Education, and School of Mathematical Sciences, Beihang University, Beijing 100191, China}

\author{Zhihao Bian}
\affiliation{School of Science, Jiangnan University, Wuxi Jiangsu 214122, China}

\date{\today} 

\begin{abstract}
The imaginary unit $i$ has recently been experimentally proven to be indispensable for quantum mechanics. We study the differences in detection power between real and complex entanglement witnesses (EWs) distinguished by whether their matrix expressions incorporate imaginary parts. We show that a real EW (REW), denoted by a real Hermitian matrix, must detect one entangled state of a real density matrix, and conversely an entangled state of a real density matrix must be detected by one REW. We present a necessary and sufficient condition for the entangled states detected by REWs, and give a specific example implying the detection limitations of REWs. 
From an operational perspective, we investigate whether all entangled states are detected by the EWs locally equivalent to some REWs. We prove the validity for all NPT (non-positive partial transpose) states. We also derive a necessary and sufficient condition of the validity for the PPT (positive partial transpose) entangled states of complex density matrices. By this condition we show the validity for a family of two-qutrit PPT entangled states of rank four. Another way to figure out the problem is to check whether a counterexample exists. We propose a method to examine the existence from a set-theoretic perspective, and provide some supporting evidence of non-existence. Finally, we derive some results on local projections of EWs with product projectors. 
\end{abstract}


\maketitle


\section{Introduction}
\label{sec:intro}

Complex numbers featured by the imaginary unit $i$ not only play an essential role in mathematics, but have been taken as effective tools widely used in physics, engineering and other fields. 
In most theories of physics, e.g. electromagnetism and signal-processing theories, introducing $i$ only brings mathematical advantages, while the roots are actually formulated by real numbers, as physics experiments are expressed in terms of probabilities, hence real numbers. The quantum theory seems to be a different case, since the foundations of quantum mechanics are laid on complex numbers, e.g. complex-valued Schr\"{o}dinger equations and the operators acting on complex Hilbert spaces \cite{vonNbook,diracbook}. This has puzzled countless physicists, including the fathers of the quantum theory \cite{letters}, who preferred a real version of quantum theory. 

Then a fundamental question of whether complex numbers are actually needed in the quantum formalism is raised naturally.
For several alternative formalisms of quantum theory, it has been shown that complex numbers are not required to simulate quantum systems and their evolution \cite{rsim09,rqspra13}. 
However, a standard formalism of quantum theory has recently been confirmed to necessarily contain complex numbers, both theoretically and experimentally. In Ref. \cite{cqnat21} Renou et al. devised a Bell-like three-party game based on deterministic entanglement swapping, and predicted the game results would be different when the players obey two different formulations of quantum theory in terms of real and complex Hilbert spaces respectively. Thus the real version of quantum theory can be experimentally falsified. After that, the predictions have been successfully realized based on various experimental platforms \cite{rqtprl2201,rqtprl2202,rqtprl2203}, thus the imaginary unit $i$ is not only a mathematical tool but also an essential reality in quantum world. Moreover, the quantification of the imaginarity of quantumness was proposed by Hickey and Gour \cite{Hickey_2018}. It leads to a series of works on the measures of imaginarity of quantum states in the framework of resource theory \cite{rti21prl,rti21pra,isc23,rti24cp,bci24}. In summary, the different performances that real and complex numbers display in quantum mechanics have attracted great interest.

Previous works mentioned above inspire us to investigate how essential complex numbers are in entanglement detection. Although the separability problem can be reduced to studying the states of real density matrices for some special cases \cite{linreal13,yics19}, the entanglement theory is generally defined over the complex Hilbert space rather than the real Hilbert space. Entanglement detection is at the heart of entanglement theory, and EWs are recognized as a fundamental tool to physically detect entanglement \cite{entdect2009}. Here, we are interested in the differences of detection power between REWs and CEWs (complex EWs) whose matrices are respectively real and complex. The formulas of REWs incorporating no imaginary parts could be more useful in some practical settings according to the resource theory of imaginarity \cite{Hickey_2018}. Hence, it is necessary to characterize the detection power of REWs. The detection power of REWs is weaker than that of CEWs, if there exist entangled states detected by no REW. The existence of such states supports that entanglement theory is indeed based on complex numbers. 

The local unitary (LU) equivalence and the stochastic local operations and classical communication (SLOCC) equivalence associated with two common local operations have clear operational implications. From an operational perspective, we further consider the detection power of REWs under such two local equivalences. 
The LU and SLOCC orbits are widely used in the classification of kinds of operators including entangled states \cite{3qubitlu2000,3qubitinequiv2001} and non-local unitary gates \cite{decugate15,ugatesyi22}. Some essential properties of EWs remain unchanged under SLOCC equivalence, such as the inertia known as a signature of entanglement \cite{inertiays2020,inertia24}. The LU and SLOCC orbits also provide an efficient method to operationally implement some complicated EWs. For example, to implement the EWs locally equivalent to REWs, one may implement some REW first, assisted by a sequence of local operations. Thus, our work implies the efficient entanglement detection in experiments \cite{optewex18}. Moreover, similar to the idea \cite{weakent11} of evaluating the strength of entanglement in the states, we may propose a hierarchy of entangled states as: (i) the states detected by REWs $\Longrightarrow$ (ii) the states detected by some EWs LU equivalent to REWs $\Longrightarrow$ (iii) the states detected by some EWs SLOCC equivalent to REWs $\Longrightarrow$ (iv) all entangled states.

In this paper, we first analyze the differences in detection power between REWs and CEWs. 
In Theorem \ref{le:realEW}, we show that an entangled state of a real density matrix must be detected by one REW, and present a necessary and sufficient condition to determine whether an entangled state of a complex density matrix is detected by an REW. This condition completely relies on the separability of the real part of the complex density matrix. Using this condition we propose a specific example that cannot be detected by any REW. It implies the detection limitation of using REWs only. Then, we study the detection power of REWs under local equivalences. The problem is formulated by Conjecture \ref{cj:realew} which says that every entangled state is detected by some EW locally equivalent to a real one. We focus on the case of LU equivalence, i.e. Conjecture \ref{cj:realew} (i), and generalize some results to the case of SLOCC equivalence, i.e. Conjecture \ref{cj:realew} (ii). By Lemma \ref{le:cj1-c1} we show that Conjecture \ref{cj:realew} is valid for all NPT states, and thus restrict the study on PPT entangled states of complex density matrices. We also present conditions in Lemma \ref{le:cj1-c1} to verify whether such states are detected by the REWs under LU equivalence. By summarizing the above results, we present a flowchart, Fig. \ref{fig:flow}, to examine whether an entangled state is detected by an EW LU equivalent to some real one. Following this flowchart, we prove Conjecture \ref{cj:realew} for a family of two-qutrit states by Theorem \ref{thm:PPTEr4}. In Example \ref{ex:pptreal} we also construct a two-quqart PPT entangled state detected by no REW but can be detected by an EW LU equivalent to a real one. Furthermore, we propose an equivalent method to examine the existence of a counterexample against Conjecture \ref{cj:realew} from a set-theoretic perspective. According to Lemma \ref{le:prs-prop}, the counterexamples share several interesting properties with separable states, which suggests that the counterexamples may not exist in the low-dimensional systems. Finally, to link the EWs supported on high-dimensional spaces to those supported on lower-dimensional spaces, we investigate if it is possible to locally project an EW to another in Lemma \ref{le:cj2-c1}. As a byproduct, by Lemma \ref{le:projmindim} we estimate how small the local dimensions of the spaces supporting the locally projected EWs are.

The remainder of this paper is organized as follows: In Sec. \ref{sec:pre} we clarify some notations and definitions, and present the known facts and lemmas as the necessary basics to investigate the focused problems. In Sec. \ref{sec:limit} we reveal the differences in detection power between REWs and CEWs, and point out the limitations by using REWs only. In Sec. \ref{sec:detpower} we further study the detection power of REWs under two local equivalences. The problem is specifically formulated as Conjecture \ref{cj:realew} at the beginning of this section. In Sec. \ref{sec:ppt+sepr} we study Conjecture \ref{cj:realew} by contradiction, and characterize the counterexamples against the conjecture, from a set-theoretic perspective. In Sec. \ref{subsec:cj2} we consider whether an EW can be locally projected to an EW supported on a lower-dimensional space. Finally, the concluding remarks are given in Sec. \ref{sec:con}.

\section{Preliminaries}
\label{sec:pre}

There are two parts of this section. In the first part, Sec. \ref{subsec:nandd}, we clarify some notations and definitions as the basics. In the second part, Sec. \ref{subsec:symmat}, we first introduce some necessary facts on the symmetric and skew-symmetric matrices, as the real and imaginary parts of a Hermitian matrix are symmetric and skew-symmetric respectively. Next, we present some known lemmas which are useful to the problems studied in this paper.

\subsection{Notations and Definitions}
\label{subsec:nandd}

We first clarify the LU and SLOCC equivalences. For two bipartite matrices $M,N$, they are LU equivalent, denoted by $M\sim_{LU} N$, if there exists a product unitary matrix $X=U\ox V$ such that $XMX^\dg=N$. If such $U$ and $V$ of $X$ are generalized to be invertible, then $M,N$ are precisely SLOCC equivalent, denoted by $M\sim_{SLOCC} N$. Both LU operations and SLOCC do not change the separability of a state. Thus, such two local equivalences are widely used in characterizing entanglement structures. 

Our main purpose is to analyze the differences in detection power under local equivalences between REWs and CEWs whose matrices are respectively real and complex. First of all, it is necessary to specify the number fields for matrices.
Denote by $\cM_{m,n}(\bbC)$ and $\cM_{m,n}(\bbR)$ the two sets of matrices with $m$ rows and $n$ columns over the complex and real fields respectively. The matrices in the two sets above are called complex and real matrices respectively. For simplicity, if $m=n$, we may write the two sets as $\cM_{n}(\bbC)$ and $\cM_{n}(\bbR)$. For any matrix $M\in\cM_{m,n}(\bbC)$, denote by $M^*$ and $M^\dg$ the complex conjugate and conjugate transposition of $M$ respectively. Based on the conjugate transposition we denote $\abs{M}:=\sqrt{M^\dg M}$ which is Hermitian. For any Hermitian $H$, we call $H^+:=\frac{1}{2}(H+H^*)$ and $H^-:=\frac{1}{2i}(H-H^*)$ respectively the real and imaginary parts of $H$, according to the unique decomposition of Hermitian matrices as $H=H^+ + i H^-$ where $H^+$ is real symmetric and $H^-$ is real skew-symmetric.

We focus on bipartite entanglement. It is thus necessary to clarify the notations about bipartite Hermitian matrices. For any Hermitian matrix $H$, denote by $\cR(H)$ and $r(H)$ the range and the rank of $H$ respectively. For convenience, we simply denote the bipartite Hermitian matrix $M_{AB}$ as an $a\times b$ matrix, if $r(\tr_B(M_{AB}))\leq a$ and $r(\tr_A(M_{AB}))\leq b$. For a bipartite matrix $M_{AB}$, denote by $M_{AB}^{\G}$ the partial transpose with respect to the first subsystem $A$. The PPT and NPT states are classified by whether their partial transposes are still positive semidefinite. Note that in Ref. \cite{rebit2001} the authors considered the so-called ``real entanglement'' defined in vector spaces over the real numbers, which is different from the standard quantum entanglement. In this paper we consider entanglement in standard quantum mechanics over the complex numbers. To make it clear, \emph{a real (complex) and entangled (separable) state, in what follows, accurately means a bipartite state whose density matrix is both real (complex) and entangled (separable) by standard definition over the complex Hilbert space.} 

Next, we introduce some sets of EWs which are frequently used in the following sections. Note that here we shall omit ``bipartite'' before EWs, unless otherwise noted. A main problem in this paper asks which entangled states can be detected by the EWs locally equivalent to some REWs. To figure out the problem, we introduce the corresponding sets of EWs as follows. It is worth mentioning that the local equivalents of an EW are still EWs for both LU and SLOCC equivalences, which makes Definition \ref{def:ewsets} is well-defined.
\begin{definition}
    \label{def:ewsets}
    We define $\cE^{LU}$ as the set of EWs which are LU equivalent to some REWs, and similarly $\cE^{SLOCC}$ as the set of EWs which are SLOCC equivalent to some REWs.
\end{definition}

For a given entangled state $\rho$, to determine whether it is detected by some EW in $\cE^{LU}$ or $\cE^{SLOCC}$, we may consider whether such two sets intersect with the set $\cE_{\rho}$,
\beq
    \label{eq:ewrho}
    \cE_{\rho}:=\{W~|~\text{$W$ is an EW and $\tr(W\rho)<0$.}\}.
\eeq
The set $\cE_{\rho}$ formulated by Eq. \eqref{eq:ewrho} is the set including all EWs which detect $\rho$. 

Due to the cyclicity law of trace, i.e. $\tr(AB)=\tr(BA)$, the local operators acting on EWs can be transferred to acting on states, namely $\tr((U\ox V)W(U\ox V)^\dg\rho)=\tr(W(U\ox V)^\dg\rho(U\ox V))$ for an EW $W$ and a state $\rho$. Hence, the characterization of $\cE^{LU}$ can be transformed to characterizing a special set of PPT states defined below. The process of transformation will be discussed in detail in Sec. \ref{sec:ppt+sepr}.

\begin{definition}
    \label{def:pptseprset}
Denote by ${\cal P}_{rs}(m,n)$ the set of states $\rho_{AB}$ supported on $\bbC^m\ox\bbC^n$, satisfying the following conditions: 

(i) $\rho_{AB}$ is a PPT state;

(ii) there exists some $\sigma_{AB}\sim_{LU}\rho_{AB}$ such that the real part of $\sigma_{AB}$, namely $\sigma_{AB}^+$, is separable.
\end{definition}

Denote by $\cS(m,n)$ the set of all separable states supported on $\bbC^m\ox\bbC^n$. One can verify directly by Definition \ref{def:pptseprset} that all separable states supported on $\bbC^m\ox\bbC^n$ are included in $\cP_{rs}(m,n)$, i.e. $\cS(m,n)\subseteq\cP_{rs}(m,n)$.

\subsection{Known Facts and Lemmas}
\label{subsec:symmat}

As the real and imaginary parts of a Hermitian operator are symmetric and skew-symmetric respectively, we first give the obvious facts on symmetric and skew-symmetric matrices. 
Suppose that $S,A\in\cM_n(\bbR)$ are respectively symmetric and skew-symmetric, and $\ket{a},\ket{b}\in\bbR^n$. By definition it follows that
\begin{eqnarray}
&&
\bra{a}S\ket{b}=\bra{b}S\ket{a},
\\&&
\bra{a}A\ket{a}=
\bra{a}A\ket{b}+\bra{b}A\ket{a}=0.
\end{eqnarray}

Second, we are interested in local equivalences, especially the LU equivalence, regarding to entangled states and EWs. For this reason, we introduce a necessary decomposition of unitary matrices by virtue of real orthogonal matrices as below.

\begin{lemma}
    \label{le:unitarydec}
For any unitary matrix $U\in\cM_n(\bbC)$, there are two real orthogonal matrices $V_1, V_2$, and a diagonal matrix $D\in\cM_n(\bbC)$ such that $U=V_1DV_2$.
\end{lemma}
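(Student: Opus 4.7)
The plan is to reduce the decomposition to a structural observation about the auxiliary matrix $M := UU^T$. This $M$ is automatically complex symmetric, since $(UU^T)^T = UU^T$, and it is also unitary: using $U^T = (U^*)^{-1}$, which follows from $UU^\dg = I$, one checks that $MM^\dg = UU^T U^* U^\dg = UU^\dg = I$. So $M$ is unitary symmetric, and the strategy is to diagonalize $M$ by a real orthogonal congruence and then use the resulting diagonal factor to peel off a diagonal factor from $U$ itself.

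For the diagonalization step I would write $M = A + iB$ with $A, B \in \cM_n(\bbR)$. Symmetry of $M$ forces both $A$ and $B$ to be real symmetric, while expanding $M^* M = I$ (valid because $M^\dg = M^*$ for a symmetric unitary) yields $A^2 + B^2 = I$ and $AB = BA$. Two commuting real symmetric matrices are simultaneously diagonalizable by some real orthogonal $V_1$, so $D := V_1^T M V_1$ is a diagonal matrix whose entries satisfy $|D_{ii}| = 1$. I then choose $D_0$ diagonal with $D_0^2 = D$ entrywise; any complex square root works, and $D_0$ is automatically a diagonal unitary.

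For the second step I would set $V_2 := D_0^{-1} V_1^T U$, so that $U = V_1 D_0 V_2$ by construction, and verify that $V_2$ is real orthogonal. A direct computation gives $V_2 V_2^T = D_0^{-1} V_1^T (UU^T) V_1 D_0^{-1} = D_0^{-1} D D_0^{-1} = I$ (complex orthogonality), and unitarity of $V_1$, $D_0$, $U$ gives $V_2 V_2^\dg = I$. Combining these identities forces $V_2^T = V_2^\dg$, i.e.\ $\overline{V_2} = V_2$, so $V_2$ is real and hence real orthogonal. Taking $D = D_0$ yields the required decomposition.

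The only step that requires real insight is recognizing that $UU^T$ is unitary symmetric and therefore diagonalizable by a real orthogonal congruence via simultaneous diagonalization of its real and imaginary parts, which is effectively the Autonne--Takagi decomposition specialized to the unitary case. I do not anticipate a serious obstacle beyond this structural observation; the remainder is a mechanical verification, and no branch choice in the entrywise square root of $D$ plays any role.
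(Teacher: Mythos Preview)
Your proof is correct and follows essentially the same route as the paper: both form a symmetric unitary auxiliary matrix ($UU^T$ for you, $U^T U$ in the paper), diagonalize it by a real orthogonal congruence, take a diagonal square root, and then verify that the remaining factor is real orthogonal. The only notable difference is that you supply a self-contained argument for the diagonalization step (via simultaneous diagonalization of the commuting real and imaginary parts), whereas the paper simply cites it as a known result about normal symmetric matrices.
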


This decomposition can be derived from a known result \cite[2.5.P57]{bookmatrix} that a square matrix $M$ is normal and symmetric if and only if there is a real orthogonal matrix $Q$ and a diagonal matrix $\Lambda$ such that $M=Q^T\Lambda Q$. Applying this result to $U^T U$ for any unitary $U$, we obtain that there is a real orthogonal matrix $V_2$ and a diagonal matrix $\Lambda$ such that $U^T U=V_2^T\Lambda V_2$. Denote by $D$ the square root of $\Lambda$. Let $U=V_1DV_2$ for some unitary $V_1$. It follows from the equality $U^TU=V_2^T D^2 V_2=V_2^T DV_1^T V_1 D V_2$ that $V_1^TV_1=I$. Since $V_1$ is a unitary matrix satisfying $V_1^\dg V_1=I$, we conclude that $V_1$ is a real orthogonal matrix due to $V_1^\dg=V_1^T$. Thus, we obtain $U=V_1DV_2$ for real orthogonal $V_1,V_2$ and diagonal $D$.
Lemma \ref{le:unitarydec} shows that the essence of a complex unitary matrix is reflected by the complex diagonal entries of the diagonal matrix $D$ in the above decomposition.

Third, we introduce some known results on EWs which are used to construct REWs from CEWs. Recall that a Hermitian $W$ is called an EW, if $\tr(W\rho)\geq 0$ for any separable state $\rho$, and $\tr(W\sigma)< 0$ for at least one entangled state $\sigma$. To test the positivity of $\tr(W\rho)$, the following equalities are frequently used:
\begin{eqnarray}
\label{eq:treq-1}
&& \tr(MN)=\tr(N^T M^T)=\tr(M^T N^T), \\
\label{eq:treq-2}
&& (M^*)^\Gamma=(M^\Gamma)^*,~\text{and}~(M^+)^\Gamma=(M^\Gamma)^+, \\
\label{eq:treq-3}
&& \tr(MN^\Gamma)=\tr(M^\Gamma N)=\tr(NM^\Gamma),
\end{eqnarray}
where in Eqs. \eqref{eq:treq-2} and \eqref{eq:treq-3} the matrices are taken as bipartite Hermitian matrices.
It follows from Eq. \eqref{eq:treq-1} that $\tr(W\rho)=\tr(W^*\rho^*)$ for an EW $W$ and a state $\rho$. It means that the complex conjugate $W^*$ of an EW $W$ is also an EW, and $W^*$ detects the entangled state $\rho^*$ if and only if $W$ detects the entangled state $\rho$. This reveals a close connection between an EW and its complex conjugate. By virtue of this connection we may mutually calculate the two values $\tr(W\rho)$ and $\tr(W^*\rho^*)$ for different objects $W$ and $W^*$.
By applying Eq. \eqref{eq:treq-2} to PPT states, we derive a conclusion as follows. If $\rho$ is a PPT state, then for any $\sigma\sim_{SLOCC}\rho$ $(\sigma\sim_{LU}\rho)$, both $\sigma$ and its real part $\sigma^+$ are PPT states. It implies that the real parts are constrained to be PPT when considering PPT entangled states and their local equivalents. Eq. \eqref{eq:treq-2} is also useful to analyze and classify $W^+$, which will be discussed in detail by Corollary \ref{cor:W+}.
Moreover, it is known that the partial transpose of an NPT state can be regarded as an EW, which connects EWs and NPT states.
Further, the following lemma concluded from Eq. \eqref{eq:treq-3} reveals a general relation between $W$ and $W^\G$ for an EW $W$.

\begin{lemma}\cite[Lemma 7]{inertiays2020}
    \label{le:ewpt}
    A Hermitian matrix $W$ is an EW if and only if $W^\Gamma$ is an EW or an NPT state. 
\end{lemma}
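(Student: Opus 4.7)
The plan is to prove both directions of the biconditional using the trace identity $\tr(MN^\Gamma)=\tr(M^\Gamma N)$ from Eq.~\eqref{eq:treq-3}, together with the fact that the separable set is invariant under partial transposition. The latter follows because the partial transpose of a product state $\ket{a}\bra{a}\ox\ket{b}\bra{b}$ is $\ket{a^*}\bra{a^*}\ox\ket{b}\bra{b}$, which is again a product state; extending by convexity, $\rho$ is separable if and only if $\rho^\Gamma$ is separable. Together, these two facts transport the defining inequality of an EW between $W$ and $W^\Gamma$.

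For the forward direction, I would assume $W$ is an EW and write $\tr(W^\Gamma\rho)=\tr(W\rho^\Gamma)\ge 0$ for every separable $\rho$, since $\rho^\Gamma$ remains separable and $W$ is an EW. So $W^\Gamma$ is non-negative on all separable states. I would then split into two cases based on whether $W^\Gamma$ is positive semidefinite. If $W^\Gamma\ge 0$, then up to normalization $W^\Gamma$ is a density matrix; because $W$ is an EW it is not positive semidefinite, so $(W^\Gamma)^\Gamma=W$ is not positive semidefinite, and $W^\Gamma$ qualifies as an NPT state. If $W^\Gamma\not\ge 0$, then an eigenvector $\ket{\psi}$ of a negative eigenvalue of $W^\Gamma$ satisfies $\bra{\psi}W^\Gamma\ket{\psi}<0$; the pure state $\ket{\psi}\bra{\psi}$ cannot be separable by the non-negativity just established, so it is entangled, and $W^\Gamma$ qualifies as an EW.

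For the converse I would run essentially the same argument with the roles of $W$ and $W^\Gamma$ swapped. Assuming $W^\Gamma$ is either an EW or an NPT state, $\tr(W^\Gamma\sigma)\ge 0$ for every separable $\sigma$ in both sub-cases (NPT states are positive semidefinite; EWs are non-negative on separables by definition), and the trace identity again gives $\tr(W\rho)=\tr(W^\Gamma\rho^\Gamma)\ge 0$ for every separable $\rho$. A symmetric case analysis on $W$ then supplies a negative eigenvalue of $W$ (if $W^\Gamma$ is NPT, then $W$ is not positive semidefinite by the very definition of NPT; if $W^\Gamma$ is an EW, its negative expectation on a detected state transports back through the trace identity to exhibit a negative expectation for $W$), whose eigenvector yields an entangled pure state that $W$ detects. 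Hence $W$ is an EW.

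The main technical point to handle carefully is the case distinction on positive semidefiniteness and the extraction of an \emph{entangled} state from a mere negative eigenvalue. This is resolved uniformly by the contrapositive: any state on which a separable-non-negative Hermitian matrix gives a strictly negative expectation must itself be entangled. Apart from this small piece of bookkeeping, the argument is a symmetric application of the trace identity combined with the invariance of the separable cone under partial transposition, which is what makes the two alternatives (\textbf{EW} or \textbf{NPT state}) on the $W^\Gamma$ side exhaustive.
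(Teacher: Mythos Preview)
Your forward direction is correct and uses precisely the approach the paper indicates (the trace identity of Eq.~\eqref{eq:treq-3} together with invariance of separability under partial transposition); the paper itself does not prove the lemma but merely cites it and points to Eq.~\eqref{eq:treq-3}.

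There is, however, a genuine gap in your converse, in the sub-case where $W^\Gamma$ is an EW. You claim that ``its negative expectation on a detected state transports back through the trace identity to exhibit a negative expectation for $W$.'' From $\tr(W^\Gamma\rho)<0$ you indeed get $\tr(W\rho^\Gamma)<0$, but $\rho^\Gamma$ need not be positive semidefinite (the detected $\rho$ may well be NPT), so this does \emph{not} produce a negative eigenvalue of $W$ and hence does not yield an entangled state that $W$ detects. In fact the gap cannot be closed, because the converse is false as literally stated: take $W$ to be any NPT state. Then $W\ge 0$, so $W$ is not an EW; yet $W^\Gamma$ is not positive semidefinite and satisfies $\tr(W^\Gamma\sigma)=\tr(W\sigma^\Gamma)\ge 0$ for every separable $\sigma$, so $W^\Gamma$ \emph{is} an EW. Thus the right-hand side of the biconditional holds while the left fails. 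The paper only ever invokes the forward implication (see the proofs of Lemma~\ref{le:ew_rho}(iii) and Corollary~\ref{cor:W+}), which your argument does establish correctly; a symmetric statement that is genuinely an equivalence would read: $W$ is an EW or an NPT state if and only if $W^\Gamma$ is an EW or an NPT state.
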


\section{limited detection power of real Entanglement witnesses}
\label{sec:limit}

In this section we analyze which entangled states are detected by REWs and which entangled states can be detected only by CEWs. This question implies the detection limitations by using REWs only. To answer this question, we first characterize the set $\cE_\rho$ given by Eq. \eqref{eq:ewrho}, in Lemma \ref{le:ew_rho}. Second, by generalizing the formula as $W^+:=\frac{1}{2}(W+W^*)$, we fully characterize the properties of $tW+(1-t)W^*$, for $t\in[0,1]$, in Lemma \ref{le:WandW+}. Furthermore, in Theorem \ref{le:realEW}, we present a necessary and sufficient condition to determine whether an entangled state is detected by an REW.

It is known that any entangled state can be detected by some EW. It implies that $\cE_\rho$ is non-empty if and only if $\rho$ is entangled. By the definition of EW, for an arbitrary entangled state $\rho$, we characterize the set $\cE_{\rho}$ as follows.

\begin{lemma}
\label{le:ew_rho}
For an entangled state $\rho$, the set $\cE_{\rho}$ has the following properties:

(i) $\cE_{\rho}$ is convex. That is, $tW_1+(1-t)W_2\in\cE_{\rho},~\forall t\in[0,1]$, for any $W_1,W_2\in\cE_{\rho}$.

(ii) The complex conjugate of $\rho$, denoted by $\rho^*$, is entangled, and $W\in\cE_\rho$ if and only if $W^*\in\cE_{\rho^*}$. Furthermore, $\cE_{\rho}$ contains an REW if and only if $\cE_{\rho}$ intersects with $\cE_{\rho^*}$, i.e. $\cE_{\rho}\cap\cE_{\rho^*}\neq\emptyset$.

(iii) For PPT entangled $\rho$, $W\in \cE_{\rho}$ if and only if $W^\Gamma\in \cE_{\rho^\Gamma}$.


(iv) Suppose that $\rho$ is PPT entangled and has a separable real part, i.e., $\rho^+:=\frac{1}{2}(\rho+\rho^*)$ is separable. It follows that $\tr(W\rho^*)\geq-\tr(W\rho)>0$, if $W\in\cE_{\rho}$; and $\tr(W\rho)\geq-\tr(W\rho^*)>0$, if $W\in\cE_{\rho^*}$.

(v) Let $\sigma=(U\ox V)\rho(U\ox V)^\dg$ for some LU matrix $U\ox V$. Then $(U\ox V) W (U\ox V)^\dg\in\cE_{\sigma}$ for any $W\in\cE_\rho$. It follows that $\cE_{\sigma}=(U\ox V) \cE_\rho (U\ox V)^\dg$. 
\end{lemma}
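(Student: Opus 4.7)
The plan is to prove each of the five items by unpacking definitions and applying the trace identities Eqs.~\eqref{eq:treq-1}--\eqref{eq:treq-3} together with Lemma~\ref{le:ewpt}. The key observation that feeds into most parts is that $\tr(W\rho)$ is real whenever $W$ and $\rho$ are Hermitian, so by Eq.~\eqref{eq:treq-1} one has $\tr(W\rho)=\tr(W\rho)^{*}=\tr(W^{*}\rho^{*})$. For part (i), I would note that the set of Hermitian matrices $H$ satisfying $\tr(H\sigma)\geq 0$ on every separable $\sigma$ is a convex cone, and the strict condition $\tr(H\rho)<0$ is preserved under convex combinations because each component already has strictly negative trace on $\rho$. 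For part (ii), the identity above yields $W\in\cE_{\rho}\Longleftrightarrow W^{*}\in\cE_{\rho^{*}}$ once I also remark that $\rho^{*}$ is separable iff $\rho$ is (conjugate a pure-product decomposition), and likewise $W^{*}$ is an EW iff $W$ is; the REW statement then follows by observing that $W\in\cE_{\rho}\cap\cE_{\rho^{*}}$ forces $W^{*}\in\cE_{\rho}$, whence by (i) the real Hermitian matrix $\tfrac{1}{2}(W+W^{*})$ is itself a witness in $\cE_{\rho}$. For part (v), trace cyclicity immediately gives $\tr((U\ox V)W(U\ox V)^{\dg}\sigma)=\tr(W\rho)$, and since LU operations preserve separability the conjugated witness is still an EW; the equality of sets follows by applying the same argument symmetrically with $\rho=(U\ox V)^{\dg}\sigma(U\ox V)$.

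Parts (iii) and (iv) require a bit more care. For (iii), I would apply Eq.~\eqref{eq:treq-3} to obtain $\tr(W\rho)=\tr(W^{\Gamma}\rho^{\Gamma})$ and verify that $\rho^{\Gamma}$ is itself an entangled state: it is a valid state because $\rho$ is PPT, and it is entangled because otherwise $\rho=(\rho^{\Gamma})^{\Gamma}$ would be separable. The only genuine subtlety lies in the forward direction, where I must confirm that $W^{\Gamma}$ is an EW rather than merely an NPT state; by Lemma~\ref{le:ewpt} these are the only two options, and the NPT possibility is ruled out because $W^{\Gamma}\geq 0$ would force $\tr(W^{\Gamma}\rho^{\Gamma})\geq 0$, contradicting $\tr(W\rho)<0$. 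The converse direction is immediate since Lemma~\ref{le:ewpt} already says that $W^{\Gamma}$ being an EW implies $W$ is an EW. For (iv), the essential input is the separability of $\rho^{+}$, which yields $0\leq 2\tr(W\rho^{+})=\tr(W\rho)+\tr(W\rho^{*})$ for any EW $W$; rearranging gives $\tr(W\rho^{*})\geq -\tr(W\rho)$, and strict positivity of the right-hand side follows from the assumption $W\in\cE_{\rho}$. The companion bound for $W\in\cE_{\rho^{*}}$ comes from the same computation with $\rho$ and $\rho^{*}$ swapped.

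No step in the argument appears to require a genuinely new idea beyond the identities already listed in the preliminaries. The main obstacle, if one can call it that, is purely bookkeeping: tracking where complex conjugates and partial transposes land in part (ii) and part (iii), and correctly invoking Lemma~\ref{le:ewpt} to dispose of the NPT alternative for $W^{\Gamma}$. Both are routine once the real-valuedness of $\tr(W\rho)$ is highlighted at the outset.
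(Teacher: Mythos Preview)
Your proposal is correct and follows essentially the same route as the paper's proof: each item is obtained from the trace identities \eqref{eq:treq-1}--\eqref{eq:treq-3}, Lemma~\ref{le:ewpt}, and the separability of $\rho^{+}$, exactly as you outline. The only cosmetic difference is in part~(ii), where you invoke part~(i) to conclude $W^{+}\in\cE_{\rho}$ from $W,W^{*}\in\cE_{\rho}$, whereas the paper computes $\tr(W^{+}\rho)$ directly; and in part~(iii) you should note that the converse direction needs the same NPT-exclusion argument as the forward one (Lemma~\ref{le:ewpt} alone gives only that $W$ is an EW \emph{or} an NPT state), though by the evident symmetry $W\leftrightarrow W^{\Gamma}$, $\rho\leftrightarrow\rho^{\Gamma}$ this is immediate.
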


We show the detailed proof of Lemma \ref{le:ew_rho} in Appendix \ref{sec:proof-1}.
The linear combination of $W$ and $W^*$ as $W^+:=\frac{1}{2}(W+W^*)$ provides a direct way to construct an REW from a given EW. It motivates us to characterize the properties of the general linear combination of $W$ and $W^*$ as follows. 

\begin{lemma}
    \label{le:WandW+}
Suppose that $W$ is an EW and denote $W_t:=tW+(1-t)W^*$ for $t\in[0,1]$.

(i) For any $t\in[0,1]$, $W_t$ is an EW if and only if it is not positive semidefinite.

(ii) For a given $t\in[0,1]$, $W_t$ is an EW if and only if $W$ detects $\rho_t$ with $\rho_t:=t\rho+(1-t)\rho^*$ for some state $\rho$. 
\end{lemma}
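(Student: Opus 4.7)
The plan is to handle part (i) first and then deduce part (ii) from it via a short algebraic rearrangement. The key fact I rely on, established just after \eqref{eq:treq-3} in the preliminaries, is that $W^*$ is itself an EW whenever $W$ is; hence $W_t=tW+(1-t)W^*$ is a convex combination of two EWs for every $t\in[0,1]$.

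For (i), I first observe that for any separable state $\rho$,
\begin{equation}
\tr(W_t\rho)=t\tr(W\rho)+(1-t)\tr(W^*\rho)\geq 0,
\end{equation}
so $W_t$ is automatically non-negative on all separable states. Thus the only obstruction to $W_t$ being an EW is the existence of some state on which $\tr(W_t\cdot)$ is strictly negative. If $W_t\succeq 0$, no such state exists and $W_t$ fails to be an EW. Conversely, if $W_t\not\succeq 0$, some pure state $\ket{\psi}$ satisfies $\bra{\psi}W_t\ket{\psi}<0$; since product pure states always produce a non-negative value on $W_t$, $\ket{\psi}$ must be entangled, certifying that $W_t$ is a genuine EW.

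For (ii), the main ingredient is the identity $\tr(W^*\sigma)=\tr(W\sigma^*)$, which follows directly from \eqref{eq:treq-1} together with the Hermiticity of $\sigma$. Applying it to any state $\sigma$ yields
\begin{equation}
\tr(W_t\sigma)=t\tr(W\sigma)+(1-t)\tr(W\sigma^*)=\tr\bigl(W(t\sigma+(1-t)\sigma^*)\bigr)=\tr(W\rho_t),
\end{equation}
where $\rho_t:=t\sigma+(1-t)\sigma^*$ is itself a bona fide state as a convex combination of density matrices. Combining this with part (i), $W_t$ is an EW iff there exists some state $\sigma$ with $\tr(W_t\sigma)<0$, equivalently iff $W$ detects the corresponding $\rho_t$, as claimed.

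I anticipate no substantial obstacle; the argument is essentially bookkeeping once the identity $\tr(W^*\sigma)=\tr(W\sigma^*)$ is in hand. The only point that deserves care is verifying that the pure state $\ket{\psi}$ certifying $W_t\not\succeq 0$ in part (i) is necessarily entangled, which is immediate from the non-negativity of $W_t$ on separable states.
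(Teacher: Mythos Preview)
Your proposal is correct and follows essentially the same approach as the paper. The only cosmetic difference is that for part (i) the paper verifies non-negativity of $W_t$ directly on product pure states via $\bra{a,b}W^*\ket{a,b}=\bra{a^*,b^*}W\ket{a^*,b^*}$, whereas you invoke the already-established fact that $W^*$ is an EW; the identity $\tr(W_t\sigma)=\tr(W\rho_t)$ and its use in part (ii) are identical in both arguments.
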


One may refer to Appendix \ref{sec:proof-1} for the detailed proof of Lemma \ref{le:WandW+}.
By virtue of Lemma \ref{le:WandW+}, we specifically classify $W^+$ as follows, relying only on the properties of $W$.
\begin{corollary}
\label{cor:W+}
Suppose that $W$ is an EW. It is known that $W^+$ is either an EW or a state. Specifically,

(i) $W^+$ is an EW if and only if $W$ detects a real and entangled state;

(ii) $W^+$ is a PPT state if and only if $W$ detects no real and entangled state and $W^\Gamma$ is either an NPT state or an EW detecting no real and entangled state;

(iii) $W^+$ is an NPT state if and only if $W$ detects no real and entangled state and $W^\Gamma$ is an EW detecting at least one real and entangled state. Moreover, if $W^+$ is an NPT state, the real and entangled states detected by $W^\Gamma$ must be NPT ones.
\end{corollary}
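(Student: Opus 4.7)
The proof plan is to derive all three parts as consequences of Lemma~\ref{le:WandW+} applied at $t=1/2$, combined with the identity $(W^+)^\Gamma=(W^\Gamma)^+$ from Eq.~\eqref{eq:treq-2} and the EW/NPT dichotomy for the partial transpose of an EW in Lemma~\ref{le:ewpt}. A useful preliminary is Lemma~\ref{le:WandW+}(i) at $t=1/2$: it already guarantees that $W^+$ is either an EW or positive semidefinite, so cases (i)--(iii) are exhaustive and mutually exclusive, and the task in (ii) versus (iii) is merely to decide the PPT/NPT status of the PSD matrix $W^+$.

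For part (i), I would invoke Lemma~\ref{le:WandW+}(ii) with $t=1/2$: $W^+$ is an EW iff $W$ detects $\rho_{1/2}=\rho^+$ for some state $\rho$. Since $\rho^+$ is automatically a real Hermitian matrix and any state detected by an EW is entangled, this is equivalent to $W$ detecting some real and entangled state. The forward direction just takes $\rho=\sigma$ for the witnessed real entangled $\sigma$ (so $\rho^+=\sigma$), while the reverse direction uses that a real separable $\rho^+$ would satisfy $\tr(W\rho^+)\ge 0$.

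For (ii) and (iii) I would assume $W^+$ is a state, so by (i), $W$ detects no real and entangled state, and then decide PPT versus NPT via $(W^+)^\Gamma=(W^\Gamma)^+$. Lemma~\ref{le:ewpt} splits the analysis into two sub-cases. If $W^\Gamma$ is an NPT state, then $W^\Gamma$ is PSD, hence so is $(W^\Gamma)^+$, forcing $W^+$ to be PPT. If instead $W^\Gamma$ is an EW, then applying part (i) to $W^\Gamma$ tells us that $(W^\Gamma)^+$ is an EW iff $W^\Gamma$ detects a real and entangled state, which translates to \textit{$W^+$ is NPT iff $W^\Gamma$ detects a real entangled state}, and \textit{$W^+$ is PPT in the remaining sub-case}. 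Collecting cases yields exactly (ii) and (iii).

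The last assertion of (iii) I would handle by a short proof by contradiction. Assume $W^+$ is NPT and let $\sigma$ be a real and entangled state detected by $W^\Gamma$. If $\sigma$ were PPT, then $\sigma^\Gamma$ would be a real state, and by Eq.~\eqref{eq:treq-3} one has $\tr(W\sigma^\Gamma)=\tr(W^\Gamma\sigma)<0$, so $W$ would detect $\sigma^\Gamma$; since $W$ is an EW, $\sigma^\Gamma$ must be entangled, producing a real and entangled state witnessed by $W$. But $W^+$ being NPT means $W^+$ is not an EW, which by part (i) contradicts the existence of any real and entangled state detected by $W$. The only step requiring care is not to swap the roles of $W$ and $W^\Gamma$ at this last stage; apart from that, I do not anticipate a substantive obstacle, as the whole corollary is a systematic unfolding of Lemmas~\ref{le:WandW+} and~\ref{le:ewpt}.
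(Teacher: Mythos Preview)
Your proposal is correct and follows essentially the same route as the paper: part (i) via Lemma~\ref{le:WandW+}(ii) at $t=1/2$, then parts (ii)--(iii) by combining $(W^+)^\Gamma=(W^\Gamma)^+$ with the dichotomy of Lemma~\ref{le:ewpt} and reapplying (i) to $W^\Gamma$. The only cosmetic difference is that for the final assertion of (iii) the paper cites Lemma~\ref{le:ew_rho}(iii) directly, whereas you inline that step using Eq.~\eqref{eq:treq-3}; the content is identical.
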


The proof of Corollary \ref{cor:W+} is put into Appendix \ref{sec:proof-1}.
Based on the analysis above, the real part of a CEW could be an REW. Using this fact, we derive a necessary and sufficient condition to determine whether an entangled state is detected by an REW. 

\begin{theorem}
    \label{le:realEW}
    (i) For an REW, if it detects an entangled state $\rho$, then it also detects $\rho^*$ and $\rho^+$. For a real and entangled state, if it is detected by an EW $W$, then it is also detected by the two EWs as $W^*$ and $W^+$.

    (ii) An entangled state $\rho$ is detected by some REW if and only if $\rho^+$ is a real and entangled state.
\end{theorem}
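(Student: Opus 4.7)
The plan is to reduce the whole statement to the trace identity $\tr(W\rho)=\tr(W^*\rho^*)$ from Eq.~\eqref{eq:treq-1} combined with the linearity of trace applied to the convex combinations $\rho^+=\frac{1}{2}(\rho+\rho^*)$ and $W^+=\frac{1}{2}(W+W^*)$. First I would record that $\rho^+$ is automatically a valid quantum state: it is Hermitian with unit trace, and it is positive semidefinite because $\rho^*$ is a state whenever $\rho$ is, and the PSD cone is convex. So every detection claim about $\rho^+$ is well-posed, and the proof collapses to checking the sign of a few traces.

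For Part (i), suppose first that $W$ is an REW with $\tr(W\rho)<0$, so $W=W^*$. Plugging into Eq.~\eqref{eq:treq-1} gives $\tr(W\rho^*)=\tr(W\rho)<0$, hence $W$ detects $\rho^*$, and averaging yields $\tr(W\rho^+)=\tr(W\rho)<0$, so $W$ detects $\rho^+$ as well. For the second half, suppose $W$ detects a real entangled state $\rho=\rho^*$. The same identity now gives $\tr(W^*\rho)=\tr(W\rho)<0$, so $W^*$ detects $\rho$, and consequently $\tr(W^+\rho)=\tr(W\rho)<0$. To upgrade this from a sign statement to a genuine detection statement I would invoke Corollary~\ref{cor:W+}(i), which guarantees that $W^+$ is actually an EW precisely when $W$ detects some real entangled state, which is exactly our hypothesis.

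Part (ii) then follows with essentially no additional work. The forward direction is immediate from Part (i): if some REW detects $\rho$, it also detects $\rho^+$, so $\rho^+$ is entangled, and it is real by construction. For the converse, pick any EW $W$ detecting the real entangled state $\rho^+$. Corollary~\ref{cor:W+}(i) tells us $W^+$ is an EW, and it is manifestly real. A brief computation, using $\tr(W^*\rho)=\tr(W\rho^*)$ from Eq.~\eqref{eq:treq-1}, gives
\begin{equation}
\tr(W^+\rho)=\frac{1}{2}\bigl(\tr(W\rho)+\tr(W\rho^*)\bigr)=\tr(W\rho^+)<0,
\end{equation}
so $W^+$ is the required REW detecting $\rho$.

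I do not anticipate a substantive obstacle here. The only subtle point is distinguishing $W^+$ being merely a Hermitian matrix with a negative trace against $\rho$ from $W^+$ being a genuine EW that detects $\rho$, since Lemma~\ref{le:WandW+} leaves open the possibility that $W^+$ is instead a state. This subtlety is handled completely by Corollary~\ref{cor:W+}(i), whose invocation is the single non-mechanical ingredient of the argument.
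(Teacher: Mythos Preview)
Your proof is correct and follows essentially the same approach as the paper's: both reduce everything to the identity $\tr(W\rho)=\tr(W^*\rho^*)$ together with linearity of trace. The only cosmetic difference is that you invoke Corollary~\ref{cor:W+}(i) to certify that $W^+$ is an EW, whereas the paper cites Lemma~\ref{le:WandW+}(i) directly; since the corollary is an immediate consequence of that lemma, the arguments are interchangeable.
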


\begin{proof}
(i) Let $W$ be an REW. There exists an entangled state $\rho$ such that $\tr(W\rho)<0$. It follows that 
\beq
\label{eq:realew-1}
\tr(W\rho^*)=\tr(\rho W^*)=\tr(\rho W)<0.
\eeq
Thus, we obtain that $\tr(W\rho^+)=\tr(W\rho)<0$. In other words, both $\rho^*$ and the real state $\rho^+$ are also detected by $W$. Conversely, suppose that $\rho$ is a real and entangled state detected by some EW, namely $W$. Similarly, we conclude that
\beq
\label{eq:realew-2}
\tr(W^*\rho)=\tr(\rho^* W)=\tr(\rho W)<0.
\eeq 
We conclude from Eq. \eqref{eq:realew-2} that $\rho$ is also detected by the EW $W^*$.
By calculation, it follows that $\tr(W^+ \rho)=\tr(\rho W)<0$. It implies that $W^+$ is not positive semidefinite. According to Lemma \ref{le:WandW+} (i), we know that $W^+$ should be an EW, and thus the real and entangled state $\rho$ is detected by the REW $W^+$.

(ii) It remains to consider whether a complex and entangled state can be detected by some REW by assertion (i). We first show the ``If'' part. Let $\rho$ be a complex and entangled state whose real part $\rho^+$ is also entangled. It follows from assertion (i) that $\rho^+$ is detected by some REW, namely $W_r$. Then by Eq. \eqref{eq:realew-1} we conclude that 
\beq
\label{eq:realew-3}
\tr(W_r\rho)=\frac{1}{2}(\tr(W_r\rho)+\tr(W_r\rho^*))=\tr(W_r\rho^+)<0.
\eeq
It follows that both $\rho$ and $\rho^+$ are detected by the same REW $W_r$. Second, we show the "Only if" part. Suppose that $\rho$ is a complex and entangled state detected by some REW, namely $W_r$. According to $\rho^+=\frac{1}{2}(\rho+\rho^*)$, we assume that the state $\rho^+$ is separable by contradiction. By definition we conclude that $\tr(W\rho^+)\geq 0$ for any EW $W$. However,  one can verify as below:
\beq
\label{eq:realew-3.1}
\tr(W_r\rho^+)=\frac{1}{2}(\tr(W_r\rho)+\tr(W_r\rho^*))=\tr(W_r\rho)<0.
\eeq
The second equality holds also for Eq. \eqref{eq:realew-1}.
Then we obtain a contradiction, and thus $\rho^+$ must be entangled.

This completes the proof.
\end{proof}

We know directly from Theorem \ref{le:realEW} that (i) an REW must detect a real and entangled state, and any real and entangled state can be detected by some REW; (ii) whether an entangled state can be detected by some REW depends entirely on the separability of its real part.

Finally, by virtue of a family of entangled states, we make some necessary remarks on the results derived above. Let $\ket{\psi(\theta)}=\frac{1}{\sqrt{2}}(\ket{0,0}+e^{i\theta}\ket{1,1})$ for $\theta\in[0,2\pi)$ be the family of entangled states. First, we emphasize that the precondition that $W$ is an EW is essential to both Lemma \ref{le:WandW+} and Corollary \ref{cor:W+}, as there exists a Hermitian matrix which is not an EW but whose real part is either an EW or a state. We shall propose an example to show that a Hermitian matrix may not be an EW even if its real part is an EW. The example is given by
\beq
\label{eq:realew-4.1}
H=\frac{1}{2}
\bma
1 & i & 2i & 3i \\
-i & 0 & 1 & 4i \\
-2i & 1 & 0 & 5i \\    
-3i & -4i & -5i & 1 
\ema.
\eeq
One can verify that $H^+$ is the partial transpose of $\proj{\psi(0)}$ which is a two-qubit NPT state, and thus $H^+$ is an REW. Nevertheless, $H$ given by Eq. \eqref{eq:realew-4.1} is not an EW for the following reason. By direct calculation $H$ has two negative eigenvalues and two positive eigenvalues, while any two-qubit EW has exact one negative eigenvalue and three positive eigenvalues \cite[Example 1]{ewspectral2008}. 

Second we show different cases of $W^+$ for an EW $W$ according to Corollary \ref{cor:W+}. Denote by $W(\theta):=\proj{\psi(\theta)}^\Gamma$ the EWs generated from $\ket{\psi(\theta)}$. Then the real part of $W(\theta)$ reads as 
\begin{equation*}
\bal
W(\theta)^+&=\frac{1}{2}(\proj{0,0}+\proj{1,1})\\
&+\frac{\cos\theta}{2}(\ketbra{0,1}{1,0}+\ketbra{1,0}{0,1}).
\eal
\end{equation*}
According to Lemma \ref{le:WandW+} (i), $W(\theta)^+$ is an EW if it is not positive semidefinite. By calculation the four eigenvalues of $W(\theta)^+$ are $\frac{1}{2},\frac{1}{2},\pm\frac{\cos\theta}{2}$.
It follows that $W(\theta)^+$ is an EW when $\cos\theta\neq 0$, and $W(\theta)^+$ is a separable state when $\cos\theta=0$. Moreover, it follows from Corollary \ref{cor:W+} (ii) that $W(\theta)$ detects no real and entangled state when $\cos\theta=0$. It means that there exists a CEW detecting no real and entangled state, even for the two-qubit system. However, it follows from Theorem \ref{le:realEW} (i) that every REW detects at least one real and entangled state. 
Hence, the family of EWs represented by $W(\theta)$ reflects the difference in detection power between REWs and CEWs.

Third, we propose a specific example showing that there exist complex and entangled states which cannot be detected by any REW. Based on Theorem \ref{le:realEW} (ii) it suffices to construct an entangled state whose real part is separable. Setting $\theta=\frac{\pi}{2}$, the entangled state $\rho:=\proj{\psi(\frac{\pi}{2})}=\frac{1}{2}(\ket{0,0}+i\ket{1,1})(\bra{0,0}-i\bra{1,1})$ cannot be detected by any REW according to Theorem \ref{le:realEW} (ii), as the real part $\rho^+=\frac{1}{2}(\proj{0,0}+\proj{1,1})$ is separable. Thus, $\rho$ is detected only by some CEWs, which clearly indicates the detection limitations of REWs.

\section{Detection power of REWs under local equivalences}
\label{sec:detpower}

To make up for the detection limitations of REWs, we further consider whether those entangled states detected by no REW can be detected by some EWs locally equivalent to the REWs. The local equivalence corresponds to local operations in the practical implementations. Thus, the motivation is to study how to operationally detect the entangled states of interest.
If an entangled state is detected by an REW assisted with local operations, we generally say this entangled state is detected by an REW up to the local equivalence. To make the problem above clear, we formulate it as the following conjecture.

\begin{conjecture}
    \label{cj:realew}
    (i) Every bipartite entangled state is detected by some $W\in\cE^{LU}$. 

    (ii) More generally, every bipartite entangled state is detected by some $W\in\cE^{SLOCC}$.
\end{conjecture}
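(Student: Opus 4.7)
The plan is to use Theorem \ref{le:realEW}(ii) to recast the conjecture as a statement about the real parts of states in an LU orbit. By cyclicity of trace, $\tr\bigl((U\otimes V)W_r(U\otimes V)^\dagger\rho\bigr)=\tr\bigl(W_r\,(U\otimes V)^\dagger\rho(U\otimes V)\bigr)$, so $\rho$ is detected by some $W\in\cE^{LU}$ if and only if some LU equivalent $\sigma\sim_{LU}\rho$ is detected by an REW, which by Theorem \ref{le:realEW}(ii) is equivalent to the existence of $\sigma\sim_{LU}\rho$ with $\sigma^+$ entangled. Part (ii) is the analogous statement with SLOCC in place of LU. The task thus becomes: for every bipartite entangled $\rho$, produce a local unitary (resp.\ local invertible) transformation that rotates some of its entanglement into the real component.

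The NPT case I would handle constructively. Given NPT $\rho$, pick $\ket\phi$ with $\bra\phi\rho^\Gamma\ket\phi<0$ and set $W:=\proj{\phi}^\Gamma$; this detects $\rho$ because $\tr(W\sigma)=\tr(\proj{\phi}\,\sigma^\Gamma)\ge 0$ for separable $\sigma$, while $\tr(W\rho)<0$ by construction. The Schmidt decomposition $\ket\phi=(U\otimes V)\ket{\phi_r}$ with real Schmidt data gives $\proj{\phi}\sim_{LU}\proj{\phi_r}$, and the identity $\bigl((U\otimes V)M(U^\dagger\otimes V^\dagger)\bigr)^\Gamma=(U^*\otimes V)M^\Gamma(U^T\otimes V^\dagger)$ yields $W\sim_{LU}\proj{\phi_r}^\Gamma$, which is a real matrix and, by Lemma \ref{le:ewpt}, an EW because $\proj{\phi_r}$ is NPT whenever $\ket{\phi_r}$ is entangled. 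This proves (i), and a fortiori (ii), for every NPT entangled state; it is essentially the content expected of Lemma \ref{le:cj1-c1}.

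The PPT case is the main obstacle, where the partial-transpose trick breaks down. I would use Lemma \ref{le:unitarydec} to decompose every local unitary as $(Q_1\otimes R_1)(D_A\otimes D_B)(Q_2\otimes R_2)$ with $Q_i,R_i$ real orthogonal and $D_A,D_B$ diagonal phase matrices. Conjugation by the orthogonal factors preserves the separability of the real part, so all freedom sits in the phases. One then expands the real part of $(D_A\otimes D_B)\rho(D_A^\dagger\otimes D_B^\dagger)$ in terms of $\rho^+$ and $\rho^-$ and searches for phases that make this combination entangled; a necessary and sufficient criterion of this form should be encoded in Lemma \ref{le:cj1-c1}, to be verified case by case as in Theorem \ref{thm:PPTEr4} for a two-qutrit rank-four family.

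The real difficulty is establishing such a criterion uniformly: set-theoretically, the counterexamples to (i) are exactly the elements of $\cP_{rs}(m,n)\setminus\cS(m,n)$, and proving emptiness in every $(m,n)$ appears roughly as hard as the conjecture itself. Lemma \ref{le:prs-prop} reportedly shows these putative counterexamples share substantial structural features with separable states, which is suggestive but not conclusive; a general argument will likely require a new analytic handle on how $\rho^+$ varies across the LU orbit of $\rho$, perhaps by exploiting the residual action of $D_A\otimes D_B$ on bound-entangled structure, and the SLOCC version (ii) would then follow by enlarging the group of allowed local transformations.
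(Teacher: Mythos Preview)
Your proposal is not a complete proof, but neither does the paper claim one: Conjecture~\ref{cj:realew} is genuinely open, and the paper offers only partial results. Your treatment aligns with the paper's status quo. The NPT argument you sketch is exactly Lemma~\ref{le:cj1-c1}(i): pick an eigenvector $\ket{\phi}$ in the negative eigenspace of $\rho^\Gamma$, form $W=\proj{\phi}^\Gamma$, and use the Schmidt decomposition to exhibit a real LU-equivalent. Your reformulation via cyclicity of trace and Theorem~\ref{le:realEW}(ii) is precisely Lemma~\ref{le:cj1-c1}(ii), and your reduction to diagonal phases after stripping real orthogonal factors via Lemma~\ref{le:unitarydec} is Lemma~\ref{le:cj1-c1}(iii). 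Your set-theoretic reformulation in terms of $\cP_{rs}(m,n)\setminus\cS(m,n)$ is exactly how the paper frames the remaining problem in Sec.~\ref{sec:ppt+sepr}.

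Where the paper goes slightly further than your proposal is in actually executing the PPT verification for a concrete family: Theorem~\ref{thm:PPTEr4} shows that every two-qutrit PPT entangled state of rank four generated by a UPB is detected by some $W\in\cE^{LU}$, by invoking the known LU classification of $3\times3$ UPBs and exhibiting explicit diagonal phases $D_A,D_B$ that make the state real; the SLOCC version (ii) for all two-qutrit rank-four PPT entangled states follows from a known SLOCC normal form. You allude to this but do not carry it out. Beyond that family, your honest acknowledgment that the general PPT case ``appears roughly as hard as the conjecture itself'' is the correct assessment and matches the paper's own position.
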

 

By Definition \ref{def:ewsets}, it follows directly that Conjecture \ref{cj:realew} (ii) holds if Conjecture \ref{cj:realew} (i) holds, while the converse may not be true. Therefore, we mainly investigate Conjecture \ref{cj:realew} (i) under LU equivalence, and extend the results to Conjecture \ref{cj:realew} (ii) under SLOCC equivalence. In Lemma \ref{le:cj1-c1} we present several conditions to determine whether an entangled state can be detected by an EW in $\cE^{LU}$ or $\cE^{SLOCC}$. In Theorem \ref{thm:PPTEr4} we show the validity of Conjecture \ref{cj:realew} for a family of two-qutrit PPT entangled states of rank four. 
Finally, in Example \ref{ex:pptreal} we construct a $4\times 4$ PPT entangled state which cannot be detected by any REW but can be detected by some EW in $\cE^{LU}$.

By virtue of the PPT criterion, we may reduce the study of Conjecture \ref{cj:realew} to a subset of PPT states which are complex and entangled by the following lemma.

\begin{lemma}
\label{le:cj1-c1}
(i) Each NPT state is detected by an EW in $\cE^{LU}$.

(ii) A complex and PPT entangled state $\rho$ is detected by an EW in $\cE^{LU}$ ($\cE^{LOCC}$) if and only if there exists a state $\sigma$ LU (SLOCC) equivalent to $\rho$ such that the real part of $\sigma$, i.e. $\sigma^+$, is an entangled state.

(iii) A complex and entangled state $\rho$ is detected by an EW in $\cE^{LU}$ if and only if there exist two real orthogonal matrices $U_1,U_2$ such that $(U_1\ox U_2)^\dg\rho(U_1\ox U_2)$ is detected by an EW as $(D_1\ox D_2)W_r(D_1\ox D_2)^\dg$ for some diagonal unitaries $D_1,D_2$ and some REW $W_r$.

(iv) An entangled state can be detected by an EW in $\cE^{LU}$ if and only if it is detected by $(A\ox B)W(A\ox B)^\dg\in\cE^{SLOCC}$ for some REW $W$ and invertible matrices $A,B$, such that $(\abs{A}\ox\abs{B})W(\abs{A}\ox\abs{B})$ is real.  
\end{lemma}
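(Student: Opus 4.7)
The plan is to treat the four parts separately. Part~(i) will be handled by Schmidt decomposition of a negative eigenvector of $\rho^\G$; Parts~(ii)--(iii) reduce to Theorem~\ref{le:realEW}(ii) combined with Lemma~\ref{le:unitarydec} and the cyclicity of trace; Part~(iv) reduces to polar decomposition. For Part~(i), since $\rho^\G$ has a negative eigenvalue, pick $\ket{\psi}$ with $\bra{\psi}\rho^\G\ket{\psi}<0$. On any product vector $\ket{a,b}$ one has $\bra{a,b}\rho^\G\ket{a,b}=\bra{a^*,b}\rho\ket{a^*,b}\geq 0$, so $\ket{\psi}$ is entangled and $W:=\proj{\psi}^\G$ is a genuine EW. By Eq.~\eqref{eq:treq-3}, $\tr(W\rho)=\tr(\proj{\psi}\rho^\G)<0$, so $W$ detects $\rho$. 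Writing $\ket{\psi}=(U\ox V)\ket{\phi}$ in Schmidt form with $\ket{\phi}=\sum_i s_i\ket{i,i}$ real, the standard identity $((U\ox V)M(U\ox V)^\dg)^\G=(U^*\ox V)M^\G(U^T\ox V^\dg)$ yields $W=(U^*\ox V)\proj{\phi}^\G(U^T\ox V^\dg)$, exhibiting $W$ as LU equivalent to the real EW $\proj{\phi}^\G$.

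For Part~(ii), both directions rest on cyclicity. If $W\in\cE^{LU}$ detects $\rho$ with $W=XW_rX^\dg$ for $X=U\ox V$ and a real EW $W_r$, then $\tr(W\rho)=\tr(W_r X^\dg\rho X)<0$, so $W_r$ detects $\sigma:=X^\dg\rho X\sim_{LU}\rho$, and Theorem~\ref{le:realEW}(ii) forces $\sigma^+$ to be entangled; the converse runs the same chain in reverse, pulling $W_r$ back by $X$. The SLOCC version is verbatim after replacing $X$ by an invertible product operator, noting that invertible SLOCC conjugation preserves the EW property. For Part~(iii), I would apply Lemma~\ref{le:unitarydec} to both factors in $U\ox V$, obtaining $U\ox V=(V_1^U\ox V_1^V)(D_1\ox D_2)(V_2^U\ox V_2^V)$ with $V_j^\bullet$ real orthogonal and $D_1,D_2$ diagonal unitary. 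The inner pair $V_2^U\ox V_2^V$ conjugates $W_r$ to another real EW $W_r'$, while the outer pair $V_1^U\ox V_1^V$ is moved to the state side by cyclicity, giving exactly the stated form with $U_i=V_1^i$ real orthogonal.

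For Part~(iv), I would use polar decomposition $A=U_A\abs{A}$ and $B=U_B\abs{B}$ to rewrite
\[
(A\ox B)W(A\ox B)^\dg=(U_A\ox U_B)\bigl[(\abs{A}\ox\abs{B})W(\abs{A}\ox\abs{B})\bigr](U_A\ox U_B)^\dg,
\]
exhibiting $(A\ox B)W(A\ox B)^\dg$ as LU equivalent to $(\abs{A}\ox\abs{B})W(\abs{A}\ox\abs{B})$. When the latter is real, it remains an EW since invertible SLOCC preserves the EW property, so the left-hand side lies in $\cE^{LU}$ and detects $\rho$, giving the ``if'' direction. For ``only if'', any $XW_rX^\dg\in\cE^{LU}$ with unitary $X=U\ox V$ is already of the claimed form with $A=U$, $B=V$, and then $\abs{A}=\abs{B}=I$ makes the reality condition automatic. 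The main care in the proof lies in Part~(i), to verify that the negative eigenvector is entangled so that its partial-transposed projector is a genuine EW rather than a product state; the remainder of the argument is algebraic bookkeeping.
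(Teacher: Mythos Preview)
Your proposal is correct and follows essentially the same route as the paper's proof: Schmidt decomposition of a negative eigenvector of $\rho^\G$ for (i), cyclicity of trace plus Theorem~\ref{le:realEW}(ii) for (ii), Lemma~\ref{le:unitarydec} applied to each unitary factor for (iii), and polar decomposition for (iv). The only cosmetic difference is that in (i) you take any $\ket{\psi}$ with $\bra{\psi}\rho^\G\ket{\psi}<0$ rather than an eigenvector, and you spell out the partial-transpose identity $((U\ox V)M(U\ox V)^\dg)^\G=(U^*\ox V)M^\G(U^T\ox V^\dg)$ explicitly, which the paper leaves implicit.
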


\begin{proof}
(i) Suppose that $\rho$ is an NPT state. Let $\rho^\G=\sum_j p_j\proj{\psi_j}$ with $p_1<0$ be the spectral decomposition. It means that $\ket{\psi_1}$ is entangled, and thus $\proj{\psi_1}^\G$ is an EW. Then $\rho$ is detected by $\proj{\psi_1}^\G$ due to 
\beq 
\label{eq:lu-1}
\tr(\proj{\psi_1}^\G\rho)=\tr(\proj{\psi_1}\rho^\G)<0.
\eeq
By virtue of the Schmidt decomposition, we conclude that $\ket{\psi_1}$ is LU equivalent to a real and entangled state. It implies that $\ket{\psi_1}^\G\in\cE^{LU}$. From Eq. \eqref{eq:lu-1} we derive the assertion (i).

(ii) First, we show the ``If'' part. Assume that $\sigma\thicksim_{LU} \rho$ and $\sigma^+$ is entangled. It follows from Theorem \ref{le:realEW} (ii) that $\sigma$ is detected by some REW, namely $W_r$. According to the assumption there exists a product unitary $U\ox V$ such that $\sigma=(U\ox V)\rho (U\ox V)^\dg$. Then we obtain that
\beq
\label{eq:lu-2}
\bal
0>\tr(W_r\sigma)&=\tr(W_r(U\ox V)\rho (U\ox V)^\dg) \\
&=\tr((U\ox V)^\dg W_r(U\ox V)\rho).
\eal
\eeq
It implies that $\rho$ is detected by $(U\ox V)^\dg W_r(U\ox V)\in\cE^{LU}$.

Second, we show the ``Only if'' part. Assume that $\rho$ is detected by an EW as $(U\ox V)W_r(U\ox V)^\dg$ for some REW $W_r$ and local unitary $U\ox V$. Let $\sigma:=(U\ox V)^\dg\rho (U\ox V)$. Then we obtain that
\beq
\label{eq:lu-2.1}
\bal
0>&\tr((U\ox V)W_r(U\ox V)^\dg\rho) \\
=&\tr(W_r(U\ox V)^\dg\rho (U\ox V))=\tr(W_r\sigma).
\eal
\eeq
It implies that $\sigma$ is detected by the REW $W_r$. By Theorem \ref{le:realEW} (ii), $\sigma^+$ is both real and entangled.

Similar to the above discussion we can derive the same conclusion under SLOCC equivalence.

(iii) First, the ``If'' part follows directly from the fact:
\beq
\label{eq:lu-4.2}
\bal
0>&\tr\Big((D_1\ox D_2)W_r(D_1\ox D_2)^\dg(U_1\ox U_2)^\dg\rho(U_1\ox U_2)\Big) \\
=&\tr\Big(\big((U_1D_1)\ox(U_2D_2)\big)W_r\big((U_1D_1)^\dg\ox(U_2D_2)^\dg\big)\rho\Big).
\eal
\eeq
It means that $\rho$ is detected by an EW in $\cE^{LU}$.
Second, we show the ``Only if'' part. Suppose that $\rho$ is detected by $(X_1\ox X_2)W_r(X_1\ox X_2)^\dg$, where $W_r$ is an REW and $X_1\ox X_2$ is a local unitary. It follows from Lemma \ref{le:unitarydec} that $\forall i=1,2,~ X_i$ has the decomposition as $X_i=U_iD_iV_i$ for some real orthogonal $U_i,V_i$, and some diagonal unitary matrix $D_i$. By the decomposition it follows that 
\begin{widetext}
\beq
\label{eq:lu-4}
\bal
0>&\tr((X_1\ox X_2)W_r(X_1\ox X_2)^\dg\rho) \\
=&\tr\Big(\big((U_1D_1)\ox(U_2D_2)\big)\big[(V_1\ox V_2)W_r(V_1\ox V_2)^\dg\big]\big((U_1D_1)^\dg\ox(U_2D_2)^\dg\big)\rho\Big) \\
=&\tr\Big(\big[(V_1\ox V_2)W_r(V_1\ox V_2)^\dg\big]\big((U_1D_1)^\dg\ox(U_2D_2)^\dg\big)\rho\big((U_1D_1)\ox(U_2D_2)\big)\Big).
\eal
\eeq
\end{widetext}
Let $\sigma=\big((U_1D_1)^\dg\ox(U_2D_2)^\dg\big)\rho\big((U_1D_2)\ox(U_2D_2)\big)$. It follows from the last line in Eq. \eqref{eq:lu-4} that $\sigma$ is detected by an REW as $(V_1\ox V_2)W_r(V_1\ox V_2)^\dg$ for real orthogonal matrices $V_1$ and $V_2$. 
Denote $\tilde{W}_r=(V_1\ox V_2)W_r(V_1\ox V_2)^\dg$. It follows that $(U_1\ox U_2)^\dg\rho(U_1\ox U_2)$ is detected by $(D_1\ox D_2)\tilde{W}_r (D_1\ox D_2)^\dg$ by the last line in Eq. \eqref{eq:lu-4}. Thus, the ``Only if'' part holds.

(iv) The ``Only if'' part obviously holds as $\abs{U}=\sqrt{U^\dg U}=I$ for any unitary $U$. One can verify the ``Only if'' part by directly setting $A$ and $B$ to be unitary. Next, we show the ``If'' part. Suppose that $\rho$ is an entangled state detected by an EW as $(A\ox B)W_r(A\ox B)^\dg$ which is SLOCC equivalent to some REW $W_r$, where $A,B$ are two invertible matrices such that $(\abs{A}\ox\abs{B})W_r(\abs{A}\ox\abs{B})$ is real. It follows that $\tr((A\ox B)W_r(A\ox B)^\dg\rho)<0$. By the polar decomposition we obtain that $A=U\abs{A}$ and $B=V\abs{B}$ for some unitary $U,V$. Then we obtain that
\beq
\label{eq:lu-3}
\bal
0>&\tr((A\ox B)W_r(A\ox B)^\dg\rho) \\
=&\tr((U\ox V)(\abs{A}\ox\abs{B})W_r(\abs{A}\ox\abs{B})(U\ox V)^\dg\rho).
\eal
\eeq
Since $(\abs{A}\ox\abs{B})W_r(\abs{A}\ox\abs{B})$ is real, we conclude from Eq. \eqref{eq:lu-3} that $\rho$ is detected by an EW in $\cE^{LU}$.

This completes the proof.
\end{proof}

Here we make some necessary remarks about Lemma \ref{le:cj1-c1}. First, Lemma \ref{le:cj1-c1} (i) shows an evidence to support Conjecture \ref{cj:realew} (i), and by combining Lemma \ref{le:cj1-c1} (i) and (ii), to figure out Conjecture \ref{cj:realew} is precisely to study the complex and PPT entangled states whose real parts are separable. Second, Lemma \ref{le:cj1-c1} (ii) and (iii) present two necessary and sufficient conditions to verify the validity of Conjecture \ref{cj:realew}. Third, Lemma \ref{le:cj1-c1} (iv) establishes a connection between Conjecture \ref{cj:realew} (i) and (ii) by a necessary and sufficient condition. 

Now, we are able to provide a process of examining whether an entangled state can be detected by an EW in $\cE^{LU}$. This process is illustrated by a flowchart in Fig. \ref{fig:flow}. By following this flowchart, we verify Theorem \ref{thm:PPTEr4} and examine Example \ref{ex:pptreal}.

\begin{figure*}[htbp]
\centering
\includegraphics[width=0.8\textwidth]{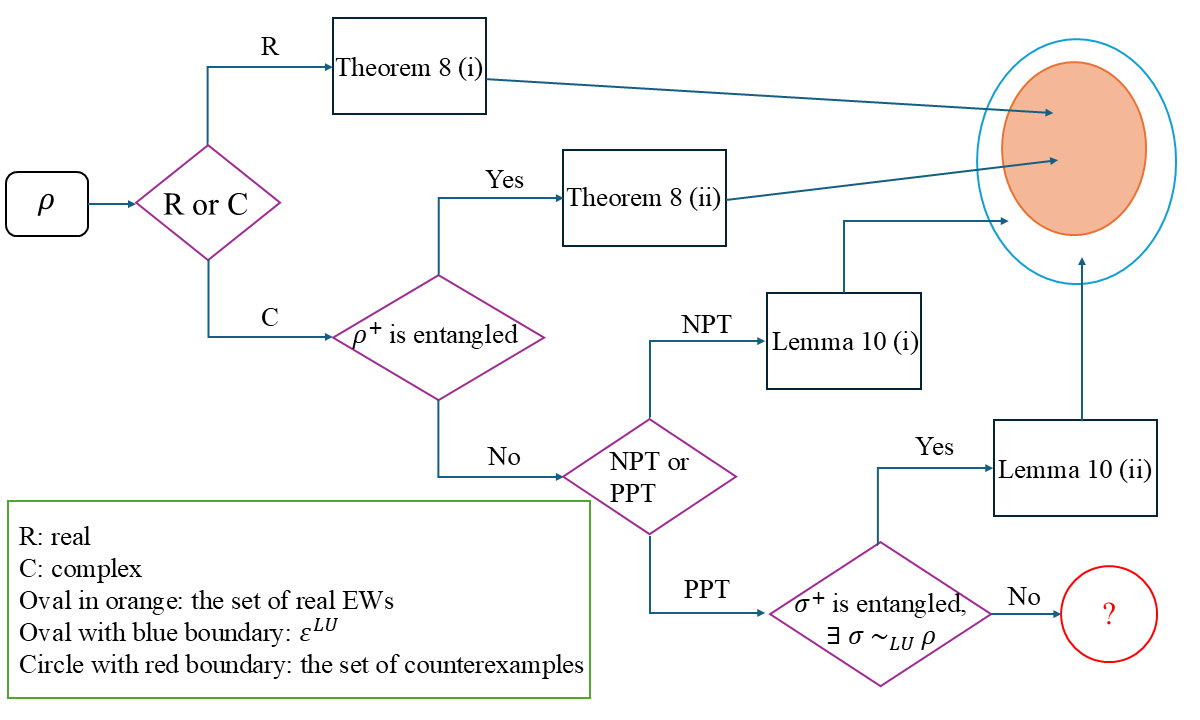}
\caption{This is the flowchart to examine whether an entangled state can be detected by an EW in $\cE^{LU}$. The set of REWs is certainly included in $\cE^{LU}$ by definition. Note that the existence of a counterexample against Conjecture \ref{cj:realew} is unknown yet. This is the reason why there is a ``?'' in the circle with red boundary.} 
\label{fig:flow}
\end{figure*}

By virtue of Lemma \ref{le:cj1-c1} (ii), we may switch the study on EWs to complex and PPT entangled states under local equivalences, which may bring some advantages because of the semidefinite positivity of states.
In the following we use the results derived above and follow the flowchart in Fig. \ref{fig:flow} to show the validity of Conjecture \ref{cj:realew} (i) for a family of two-qutrit PPT entangled states of rank four.

\begin{theorem}
\label{thm:PPTEr4}
(i) Every two-qutrit PPT entangled state of rank four generated by an unextendible product basis (UPB) can be detected by an EW in $\cE^{LU}$.

(ii) Every two-qutrit PPT entangled state of rank four is detected by an EW in $\cE^{SLOCC}$.
\end{theorem}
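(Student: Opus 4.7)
The plan is to apply Lemma \ref{le:cj1-c1}(ii): it suffices to produce, for each state in question, a local-equivalent partner whose real part is entangled. The cleanest path is to find a local equivalent that is itself real (so the real part equals the state and is entangled), after which Theorem \ref{le:realEW}(i) supplies a detecting REW $W_r$ and pulling back yields the desired EW in $\cE^{LU}$ or $\cE^{SLOCC}$.

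For part (i), I would write $\rho=\frac{1}{4}(I-\Pi_U)$, where $\Pi_U$ projects onto the span of a 5-element UPB $\{|a_i\rangle\ox|b_i\rangle\}_{i=1}^5$ in $\bbC^3\ox\bbC^3$, and reduce the claim to showing that every such UPB is LU-equivalent to a UPB of real product vectors. The first step uses Lemma \ref{le:unitarydec} (any local unitary factors as $U=V_1 D V_2$ with $V_1,V_2$ real orthogonal and $D$ diagonal) to align, via real orthogonal local rotations, one triple of vectors on each side — a triple that must span $\bbC^3$ by the UPB incidence constraint — to the real computational basis. Next, the orthogonality conditions defining a UPB force the remaining two vectors on each side into 2-dimensional real subspaces, up to overall complex phases on their components. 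The second step absorbs these residual phases with the diagonal local unitaries still available on each side. The transformed state $\sigma=(U\ox V)\rho(U\ox V)^\dg$ is then a real entangled PPT state, so Theorem \ref{le:realEW}(i) furnishes an REW $W_r$ detecting $\sigma$, whence $(U\ox V)^\dg W_r(U\ox V)\in\cE^{LU}$ detects $\rho$.

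For part (ii), I would appeal to the classification of rank-four PPT entangled states in $3\times 3$, which asserts that every such state is SLOCC-equivalent to a state generated by some UPB. After this SLOCC reduction, the analogous realification argument of part (i) runs with additional flexibility, since invertible local filters are permitted in place of unitaries; the SLOCC version of Lemma \ref{le:cj1-c1}(ii) then closes the proof.

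The main obstacle lies in the LU phase-elimination step of part (i): one must verify that the combinatorial structure of every 5-element UPB in $\bbC^3\ox\bbC^3$ leaves just enough residual diagonal-unitary freedom on each side to realify all five product vectors simultaneously. I expect this to require a case analysis following the known incidence patterns of UPBs in $3\times 3$ (the Tiles UPB and the GenTiles1 parametric families), confirming in each case that the local-unitary orbit of the UPB meets the real locus. Once that is in hand, the chain from Lemma \ref{le:cj1-c1}(ii) through Theorem \ref{le:realEW}(i) delivers both (i) and — after the classification-based SLOCC reduction — (ii).
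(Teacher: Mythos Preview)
Your high-level strategy for (i) --- produce an LU-equivalent state that is itself real, then invoke Theorem~\ref{le:realEW}(i) and pull back --- is exactly what the paper does. The paper, however, bypasses your proposed case analysis entirely: it cites the known LU normal form for two-qutrit UPBs (DiVincenzo et al.), in which the five product vectors on each side are given by explicit real trigonometric expressions in parameters $\gamma_{A(B)},\theta_{A(B)}$ with the \emph{only} complex dependence being a single phase $e^{i\phi_{A}}$ (resp.\ $e^{i\phi_{B}}$) attached to the $\ket{1}$ component. One diagonal unitary $D_A=\diag(1,e^{-i\phi_A},1)$ on each side then kills the phase, and the projector becomes real. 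So your ``absorb residual phases with diagonal local unitaries'' step is the whole argument once the normal form is quoted; no incidence-pattern case analysis is needed.

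One point in your sketch is off as written: real orthogonal matrices cannot ``align'' complex vectors to the real computational basis --- they preserve the real/imaginary split of coordinates --- so Lemma~\ref{le:unitarydec} does not buy you what your first step claims. What actually does the realification is the diagonal factor, and to know that a single diagonal suffices on each side you already need the normal form (or an equivalent structural argument). For (ii), your route (SLOCC-reduce to a UPB state, then rerun (i)) is valid, but the paper takes a shorter path: it quotes a result that every $3\times 3$ rank-four PPT entangled state is directly SLOCC-equivalent to an explicit \emph{real} state (in fact one with $\sigma=\sigma^\Gamma$), so Theorem~\ref{le:realEW}(i) applies immediately without passing through UPBs.
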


\begin{proof}
(i) There is an efficient scheme to produce PPT entangled states based on UPBs. Let $\ket{a_j,b_j}_{j=1}^k$ be a UPB. Then we may generate a PPT entangled state as
\beq
\label{eq:upbppte}
\rho:=\frac{1}{d-k}(I_d-\sum_{j=1}^k\proj{a_j,b_j}),
\eeq
where $d$ is the global dimension of the bipartite system. It is known from Ref. \cite{upbcmp2003} that every UPB in the two-qutrit system is LU equivalent to the quintuple $\ket{\psi_k}=\ket{\a_k}\ox\ket{\b_k}$, $k=0,\cdots,4$, where
\beq
\label{eq:3x3upb}
\bal
\ket{\a_0}&=\ket{0},\\
\ket{\a_1}&=\ket{1},\\
\ket{\a_2}&=\cos\t_A\ket{0}+\sin\t_A\ket{2},\\
\ket{\a_3}&=\sin\g_A(\sin\t_A\ket{0}-\cos\t_A\ket{2})+\cos\g_A e^{i\phi_A}\ket{1},\\
\ket{\a_4}&=\frac{1}{N_A}(\sin\g_A\cos\t_A e^{i\phi_A}\ket{1}+\cos\g_A\ket{2}),\\
\ket{\b_0}&=\ket{1},\\
\ket{\b_1}&=\sin\g_B(\sin\t_B\ket{0}-\cos\t_B\ket{2})+\cos\g_B e^{i\phi_B}\ket{1},\\
\ket{\b_2}&=\ket{0},\\
\ket{\b_3}&=\cos\t_B\ket{0}+\sin\t_B\ket{2},\\
\ket{\b_4}&=\frac{1}{N_B}(\sin\g_B\cos\t_B e^{i\phi_B}\ket{1}+\cos\g_B\ket{2}).
\eal
\eeq
Here the six real parameters are the angles, denoted by $\g_{A(B)},\t_{A(B)},\phi_{A(B)}$, and the normalization constants are given by the formula:
\beq
\label{eq:3x3upb-1}
N_{A(B)}=\sqrt{\cos^2\g_{A(B)}+\sin^2\g_{A(B)}\cos^2\t_{A(B)}}.
\eeq
Due to this essential LU equivalence and Eq. \eqref{eq:upbppte}, it suffices to consider if the PPT entangled states generated by the UPBs formulated above can be detected by an EW in $\cE^{LU}$. 
In view of this, we first denote a projector onto the subspace spanned by the UPB formulated by Eqs. \eqref{eq:3x3upb} and \eqref{eq:3x3upb-1} as follows:
\beq
\label{eq:3x3upb-2}
P_{\g_{A(B)},\t_{A(B)},\phi_{A(B)}}=\sum_{j=0}^4\proj{\a_k,\b_k}
\eeq
w.r.t. the six real parameters $\g_{A(B)},\t_{A(B)},\phi_{A(B)}$. Let $D_A=\diag(1,e^{i(-\phi_A)},1)$ and $D_B=\diag(1,e^{i(-\phi_B)},1)$. By observing Eq. \eqref{eq:3x3upb-2}, we obtain that 
$$(D_A\ox D_B)P_{\g_{A(B)},\t_{A(B)},\phi_{A(B)}}(D_A\ox D_B)^\dg$$ 
is a real projector. It follows that 
\beq
\label{eq:3x3upb-3}
\sigma:=\frac{1}{4} \left(I_9-(D_A\ox D_B)P_{\g_{A(B)},\t_{A(B)},\phi_{A(B)}}(D_A\ox D_B)^\dg \right)
\eeq
is a real and PPT entangled state. From Theorem \ref{le:realEW} (i) we conclude that $\sigma$ is detected by an REW, and thus the PPT entangled states generated by the UPBs formulated above is detected by an EW in $\cE^{LU}$. Therefore, assertion (i) holds, as every UPB in the two-qutrit system is LU equivalent to the UPBs formulated above.

(ii) According to Ref. \cite[Proposition 4.]{pptrank4jmp}, any two-qutrit PPT entangled state of rank four is SLOCC equivalent to a real state that is invariant under partial transposition. Specifically, for any $3\times 3$ PPT entangled state $\rho$, there exist invertible $A,B\in\cM_3(\bbC)$ such that $\sigma=(A\ox B)\rho(A\ox B)^\dg$ have the property $\sigma^\Gamma=\sigma$. Furthermore, the PPT entangled state $\sigma$ is real and can be formulated as $\sigma=C^\dg C$, where $C=[C_0,C_1,C_2]$ and
\beq
\label{eq:rank4ppte}
C_0=
\bma
0 & a & b \\
0 & 0 & 1 \\
0 & 0 & 0 \\
0 & 0 & 0 
\ema,~
C_1=
\bma
0 & 0 & 0 \\
0 & 0 & c \\
0 & 0 & 1 \\
1 & 0 & -\frac{1}{d}
\ema,~
C_2
\bma
0 & -\frac{1}{b} & 0 \\
0 & 1 & 0 \\
1 & -c & 0 \\
d & 0 & 0
\ema,
\eeq
with $a,b,c,d>0$.
Since $\sigma$ given by Eq. \eqref{eq:rank4ppte} is real-formula, it is detected by an REW from Theorem \ref{le:realEW} (i). Due to the essential SLOCC equivalence derived in Ref. \cite{pptrank4jmp}, we conclude that the assertion (ii) holds.

This completes the proof.
\end{proof}

It is known that there is no PPT entangled state in the two-qubit and qubit-qutrit systems. Hence, it follows from Lemma \ref{le:cj1-c1} (i) that Conjecture \ref{cj:realew} holds for the bipartite systems with a global dimension less than $6$. Theorem \ref{thm:PPTEr4} shows that Conjecture \ref{cj:realew} holds for a type of two-qutrit states of rank four.
We further extend the study to the bipartite system associated with $\bbC^4\ox\bbC^4$. We construct a concrete example as below to show the existence of the two-quqart PPT entangled states which cannot be detected by any REW but can be detected by some $W\in\cE^{LU}$. This example also supports Conjecture \ref{cj:realew} (i). Note that the following example stems from the PPT entangled state proposed in Ref. \cite{pptsqconjlin2019} which is related to the well-known PPT square conjecture.

\begin{example}
\label{ex:pptreal}
We construct a non-normalized state supported on $\bbC^4\ox\bbC^4$ as
\beq
\label{eq:pptrealex-1}
\bal
\rho=&(\ket{00}+\ket{11}+\ket{22})(\bra{00}+\bra{11}+\bra{22}) \\
+&\proj{02}+\proj{20}+\proj{12}+\proj{21}\\
+&(\ket{01}+\ket{10}+\ket{33})(\bra{01}+\bra{10}+\bra{33}) \\
+&\proj{03}+\proj{30}+\proj{13}+\proj{31}.
\eal
\eeq
The state $\rho$ has a real density matrix.
One can verify that $\rho^\G$ is also positive semidefinite and $\cR(\rho)$ is not spanned by product vectors. Thus, $\rho$ given by Eq. \eqref{eq:pptrealex-1} is a real and PPT entangled state which can be detected by an REW according to Theorem \ref{le:realEW} (i). Next, we propose a state LU equivalent to $\rho$ as below:
\beq
\label{eq:pptrealex-2}
\sigma:=(I_4\ox\diag(1,1,i,i))\rho(I_4\ox\diag(1,1,-i,-i)).
\eeq
Due to the LU equivalence, $\sigma$ is also a PPT entangled state but its density matrix is not real any more. One can verify that the real part of $\sigma$ is separable as follows:
\beq
\label{eq:pptrealex-3}
\bal
\sigma^+ &=(\ket{00}+\ket{11})(\bra{00}+\bra{11})+\proj{22}\\
&+(\ket{01}+\ket{10})(\bra{01}+\bra{10})+\proj{33}\\
&+\proj{02}+\proj{20}+\proj{12}+\proj{21}\\
&+\proj{03}+\proj{30}+\proj{13}+\proj{31}\\
&=\frac{1}{2}(\ket{0}+\ket{1})(\bra{0}+\bra{1})\ox(\ket{0}+\ket{1})(\bra{0}+\bra{1})\\
&+\frac{1}{2}(\ket{0}-\ket{1})(\bra{0}-\bra{1})\ox(\ket{0}-\ket{1})(\bra{0}-\bra{1})\\
&+\proj{22}+\proj{02}+\proj{20}+\proj{12}+\proj{21}\\
&+\proj{33}+\proj{03}+\proj{30}+\proj{13}+\proj{31}.
\eal
\eeq 
Therefore, $\sigma$ formulated by Eq. \eqref{eq:pptrealex-2} is a complex and PPT entangled state whose real part is separable. It is known from Theorem \ref{le:realEW} (ii) that $\sigma$ cannot be detected by any REW. Nevertheless, since $\sigma$ is LU equivalent to a real and entangled state $\rho$, we conclude that $\sigma$ is detected by an EW in $\cE^{LU}$. This example, as the high-dimensional exploration, supports Conjecture \ref{cj:realew} (i) in the case when a PPT entangled state has a separable real part.
\end{example}

According to the results derived above, we believe that Conjecture \ref{cj:realew} holds at least for the bipartite systems with low local dimensions.
The remaining part of Conjecture \ref{cj:realew} (i) is to show that the complex and PPT entangled states whose real parts are separable can be detected by an EW in $\cE^{LU}$. Therefore, to attack Conjecture \ref{cj:realew}, it is necessary to link the states with high local dimensions to those with low local dimensions, and characterize the set of PPT entangled states whose real parts are separable.

\section{Characterization of the real parts of PPT states under LU equivalence}
\label{sec:ppt+sepr}

Recall that by Lemma \ref{le:cj1-c1} (i) and (ii), Conjecture \ref{cj:realew} has been reduced to studying complex and PPT entangled states whose real parts are separable, and the problem of verifying Conjecture \ref{cj:realew} can be equivlently transformed to testing whether the real parts of the states mentioned above preserve separable under local equivalences. For this purpose, in this section we further characterize the real parts of the complex and PPT entangled states under LU equivalence. In Lemma \ref{le:realpartspace}, we show some relations between a complex PPT state itself and its real part. In Lemma \ref{le:prs-prop}, we characterize some properties of the special set $\cP_{rs}(m,n)$ of PPT states, given in Definition \ref{def:pptseprset}. The properties could be used to examine whether a counterexample against Conjecture \ref{cj:realew} exists.

For a complex PPT state, itself is related to its real part. One can infer some properties of a bipartite state $\rho_{AB}$ from its real part $\rho_{AB}^+$ as follows.
\begin{lemma}
\label{le:realpartspace}
(i) Suppose that $\rho_{AB}$ is a bipartite state supported on $\bbC^m \ox \bbC^n$. If $\rank(\rho^+_A)=p$ and $\rank(\rho^+_B)=q$, where $\rho^+_A$ and $\rho^+_B$ are two reduced states of $\rho_{AB}^+$, then $\rho_{AB}$ is indeed a state supported on $\bbC^p\ox\bbC^q$.


(ii) Denote by $\rho_{AB}=\sum_{x,y=1}^m \ketbra{x}{y}\ox \rho_{xy}$ a PPT state supported on $\bbC^m \ox \bbC^m$. Suppose that the real part $\rho_{AB}^+$ is diagonal, namely $\rho_{AB}^+=\sum_{j,k=1}^{m}p_{j,k}\proj{j,k}$, $\forall p_{jk}\geq 0$. Then the following assertions hold.

(ii.a) If $p_{j,k}=0$, then the entries in the $k$-th rows and $k$-th columns of $\rho_{j1},\rho_{j2},\cdots,\rho_{jm}$ and $\rho_{1j},\rho_{2j},\cdots,\rho_{mj}$ are all zero. 

(ii.b) If there exist distinct $k_1,\cdots,k_s\in\{1,\cdots,m\}$ and distinct $l_1,\cdots,l_t\in\{1,\cdots,m\}$ such that 
\begin{eqnarray*}
&&p_{x,k_1}=p_{x,k_2}=\cdots=p_{x,k_s}=0, \\
&&p_{y,l_1}=p_{y,l_2}=\cdots=p_{y,l_t}=0,
\end{eqnarray*}
both hold for two distinct $x,y\in\{1,\cdots,m\}$, then for any $k\in\{k_1,\cdots,k_s\}\cup\{l_1,\cdots,l_t\}$, both the $k$-th row and the $k$-th column of the non-diagonal blocks $\rho_{xy}$ and $\rho_{yx}$ of $\rho_{AB}$ are filled with zeros.

(ii.c) If $\rho_{AB}^+=\sum_{j=1}^m p_{j,j}\proj{\pi_1(j),\pi_2(j)}$ for any two permutations $\pi_1,\pi_2$ of $\{1,2,\cdots,m\}$, then $\rho_{AB}$ must be the same as $\rho_{AB}^+$ which is separable.
\end{lemma}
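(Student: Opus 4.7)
The plan is to dispatch the four assertions sequentially. Parts (i), (ii.a), and (ii.b) rest on PSD of $\rho_{AB}$ together with Hermiticity (and, where the PPT hypothesis enters, PSD of $\rho_{AB}^\G$), while (ii.c) is the substantive step where PPT does the real work.

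For (i), I would begin by noting that partial trace commutes with complex conjugation, so the reduced state of $\rho_{AB}^+$ on $A$ equals $(\rho_A)^+$, where $\rho_A:=\tr_B(\rho_{AB})$. The key observation is that for any real $\ket{v}\in\bbR^m$ with $(\rho_A)^+\ket{v}=0$, the skew-symmetry of $(\rho_A)^-$ combined with reality of $\ket{v}$ gives $\bra{v}(\rho_A)^-\ket{v}=0$ (cf.\ the facts at the start of Sec.~\ref{subsec:symmat}); hence $\bra{v}\rho_A\ket{v}=0$, and PSD of $\rho_A$ forces $\rho_A\ket{v}=0$. Because $(\rho_A)^+$ is real symmetric, its kernel admits a real basis, so $\ker((\rho_A)^+)\subseteq\ker(\rho_A)$, giving $r(\rho_A)\leq p$; analogously $r(\rho_B)\leq q$. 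The standard containment $\cR(\rho_{AB})\subseteq\cR(\rho_A)\ox\cR(\rho_B)$ then places $\rho_{AB}$ inside a $\bbC^p\ox\bbC^q$ subsystem.

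For (ii.a), the diagonal of a Hermitian matrix is real, so the $(j,k)$ diagonal entry of $\rho_{AB}$ equals $p_{j,k}$. If $p_{j,k}=0$, then $\bra{j,k}\rho_{AB}\ket{j,k}=0$ and PSD forces $\rho_{AB}\ket{j,k}=0$; in block language this kills the $k$-th column of every $\rho_{xj}$, and Hermiticity then kills the $k$-th row of every $\rho_{jx}$. Repeating the argument with $\rho_{AB}^\G$ (PSD by PPT), via the block identity $(\rho_{AB}^\G)_{u,v}=\rho_{v,u}$, delivers the remaining $k$-th column of $\rho_{jx}$ and $k$-th row of $\rho_{xj}$. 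Part (ii.b) then follows as a direct reading of (ii.a) for each vanishing $p_{x,k_i}$ and $p_{y,l_j}$, noting that the non-diagonal blocks $\rho_{xy}$ and $\rho_{yx}$ lie among those affected.

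The main obstacle is (ii.c). By (ii.a), $\rho_{AB}$ is supported in $\lin\{\ket{\pi_1(j),\pi_2(j)}\}_{j=1}^m$, so the only potentially non-zero off-diagonal entries are $\alpha:=\rho_{(\pi_1(i),\pi_2(i)),(\pi_1(i'),\pi_2(i'))}$ for $i\ne i'$. The critical step is to invoke PPT through the \emph{crossed} $2\times 2$ principal submatrix of $\rho_{AB}^\G$ at rows/columns $(\pi_1(i'),\pi_2(i))$ and $(\pi_1(i),\pi_2(i'))$: using $(\rho_{AB}^\G)_{(a,b),(c,d)}=\rho_{(c,b),(a,d)}$, its diagonal entries are $p_{\pi_1(i'),\pi_2(i)}$ and $p_{\pi_1(i),\pi_2(i')}$, \emph{both zero} since $\pi_1,\pi_2$ are bijections and $i\ne i'$, while its off-diagonal entry is precisely $\alpha$ (up to conjugation). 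Any PSD $2\times 2$ matrix with zero diagonal must vanish identically, forcing $\alpha=0$. Consequently every entry of $\rho_{AB}$ matches $\rho_{AB}^+$, so $\rho_{AB}=\rho_{AB}^+$, which is a convex combination of product projectors and hence separable. The delicate point is selecting the \emph{crossed} indices rather than the naive pair $(\pi_1(i),\pi_2(i))$, $(\pi_1(i'),\pi_2(i'))$: the latter only yields $|\alpha|^2\leq p_{i,i}\,p_{i',i'}$, which does not kill $\alpha$.
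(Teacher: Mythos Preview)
Your argument is correct and follows essentially the same route as the paper: PSD of $\rho_{AB}$ together with the skew-symmetry fact $\bra v\rho^-\ket v=0$ for real $\ket v$ handles (i), PSD of $\rho_{AB}$ and of $\rho_{AB}^\G$ gives the row/column vanishing in (ii.a)--(ii.b), and the crossed $2\times 2$ principal submatrix of $\rho_{AB}^\G$ with zero diagonal kills the off-diagonals in (ii.c). The only cosmetic differences are that the paper diagonalizes $\rho_A^+,\rho_B^+$ explicitly in (i) rather than arguing via $\ker(\rho_A^+)\subseteq\ker(\rho_A)$, and in (ii.c) first reduces to $\pi_1=\pi_2=\mathrm{id}$ by a local permutation before running your $2\times 2$ submatrix argument.
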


We put the detailed proof of Lemma \ref{le:realpartspace} in Appendix \ref{sec:proof-2}. It follows from Lemma \ref{le:realpartspace} (i) that a bipartite state has the same support as its real part, which is useful to specify the supporting space of a bipartite state from its real part. Lemma \ref{le:realpartspace} (ii) applies to the PPT states with diagonal real parts, by which one can roughly know what the PPT state is from the diagonal entries of its real part.

Another way of studying Conjecture \ref{cj:realew} is to show the existence or non-existence of the counterexample to Conjecture \ref{cj:realew} by deeply characterizing the properties that a counterexample should satisfy. Assume that there is a counterexample, namely $\rho_{AB}$, which has to be PPT by Lemma \ref{le:cj1-c1} (i). Further, by Lemma \ref{le:cj1-c1} (ii), we equivalently derive that the real part of $\rho_{AB}$ preserves separable under LU equivalence, i.e., $\left((U\ox V)\rho_{AB}(U\ox V)^\dg\right)^+$ is always separable for any product unitary $U\ox V$. It motivates us to investigate Conjecture \ref{cj:realew} (i) from a set-theoretic perspective, by focusing on the set of PPT states namely $\cP_{rs}(m,n)$ defined in Definition \ref{def:pptseprset}. Based on the analysis above, the entangled states in $\cP_{rs}(m,n)$ contradicts Conjecture \ref{cj:realew} (i). Recall the statement below Definition \ref{def:pptseprset} that the set of all separable states $\cS(m,n)$ is included in $\cP_{rs}(m,n)$. It implies that proving Conjecture \ref{cj:realew} (i) is equivalent to showing $\cP_{rs}(m,n)=\cS(m,n)$. Hence, it is necessary to characterize $\cP_{rs}(m,n)$, and compare with $\cS(m,n)$. Here, we show some properties shared by such two sets as below.

\begin{lemma}
    \label{le:prs-prop}
(i) The set $\cP_{rs}(m,n),~\forall m,n\geq 2$, given in Definition \ref{def:pptseprset} is convex and closed. \\
(ii) If $\rho_{AB}\in\cP_{rs}(m,n)$, then 

(a) $\rho_{AB}^T=\rho_{AB}^*\in\cP_{rs}(m,n)$; 

(b) $\tilde{\rho}_{AB}\in\cP_{rs}(m,n)$, for any $\tilde{\rho}_{AB}\sim_{LU}\rho_{AB}$; 

(c) $\rho_{AB}^\Gamma\in\cP_{rs}(m,n)$;

(d) $\rho_{BA}\in\cP_{rs}(n,m)$. 
\end{lemma}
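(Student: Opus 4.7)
The plan is to reduce each listed property to the corresponding property of two auxiliary sets, namely the cone of PPT states and the set $\cS(m,n)$ of separable states, both of which are known to be convex and closed. Concretely, I will view the membership condition on $\cP_{rs}(m,n)$ as an intersection, over all product unitaries $U\ox V$, of the preimages of $\cS(m,n)$ under the continuous linear maps $\rho\mapsto[(U\ox V)\rho(U\ox V)^\dg]^+$, together with the preimage of the PPT cone. Each transformation in (ii) will then be transported into a compatible transformation of these auxiliary sets.

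For assertion (i), convexity will combine convexity of the PPT cone with the linearity of $[(U\ox V)(\cdot)(U\ox V)^\dg]^+$ in the argument: a convex combination $\rho_t=t\rho_0+(1-t)\rho_1$ produces, for every $U\ox V$, the same convex combination of two separable states, which is separable by convexity of $\cS(m,n)$. Closedness will follow because each of the maps in the intersection is continuous and both $\cS(m,n)$ and the PPT cone are closed.

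For the four parts of (ii), I will handle (a), (b), (d) by elementary manipulations. Part (a) uses Hermiticity to identify $\rho^T$ with $\rho^*$, together with the identities $(M^*)^+=M^+$ and $(X\rho^* X^\dg)^+=(X^*\rho X^{*\dg})^+$, which show that replacing $\rho$ by $\rho^*$ merely relabels $U\ox V$ by $U^*\ox V^*$, still a product unitary. Part (b) is immediate by composing two product unitaries. For (d), the swap operator converts any $(U\ox V)\rho_{BA}(U\ox V)^\dg$ into $(V\ox U)\rho_{AB}(V\ox U)^\dg$ on $AB$, and both separability and PPT are symmetric under relabeling of the subsystems.

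The main obstacle is (c), the invariance under partial transpose. Here the plan is to verify by a direct basis-tensor computation the intertwining identity $(U\ox V)M^\Gamma(U\ox V)^\dg=[(U^*\ox V)M(U^*\ox V)^\dg]^\Gamma$, valid for every unitary $U\ox V$ and every Hermitian $M$. Combined with the commutation $(M^+)^\Gamma=(M^\Gamma)^+$ from Eq.~\eqref{eq:treq-2}, this rewrites $[(U\ox V)\rho^\Gamma(U\ox V)^\dg]^+$ as $\{[(U^*\ox V)\rho(U^*\ox V)^\dg]^+\}^\Gamma$. The inner real part is separable by the hypothesis $\rho_{AB}\in\cP_{rs}(m,n)$ applied to the product unitary $U^*\ox V$, and separability is preserved by partial transpose, so the outer expression is also separable. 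The bookkeeping with complex conjugates of unitaries is the only genuine subtlety; once the intertwining identity is in place, every assertion reduces to convexity or closure of $\cS(m,n)$.
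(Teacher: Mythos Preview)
Your proposal is correct and follows essentially the same route as the paper: linearity of $\rho\mapsto[(U\ox V)\rho(U\ox V)^\dg]^+$ for convexity, and the same conjugation, composition, swap, and partial-transpose intertwining identities (together with $(M^+)^\Gamma=(M^\Gamma)^+$) for parts (ii)(a)--(d). The one minor variation is closedness: you argue abstractly that $\cP_{rs}(m,n)$ is an intersection of preimages of the closed sets $\cS(m,n)$ and the PPT cone under continuous linear maps, whereas the paper proves the complement is open by fixing, for each $\rho\notin\cP_{rs}(m,n)$, a product unitary $U\ox V$ and a real witness $W_r$ detecting $[(U\ox V)\rho(U\ox V)^\dg]^+$, and then using continuity of $\tr$ to obtain a neighborhood still in the complement---the same content at a lower level of abstraction.
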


We put the detailed proof of Lemma \ref{le:prs-prop} into Appendix \ref{sec:proof-2}. One can verify that the set $\cS(m,n)$ also shares the properties given in Lemma \ref{le:prs-prop}. This evidence does not violate $\cP_{rs}(m,n)=\cS(m,n)$ which is equivalent to Conjecture \ref{cj:realew} (i). Furthermore, by generalizing the LU equivalence in Definition \ref{def:pptseprset} to SLOCC equivalence, one can analogously show that the generalized set of $\cP_{rs}(m,n)$ also has similar properties as Lemma \ref{le:prs-prop}.

\section{Projecting EWs by local transformations}
\label{subsec:cj2}

It is known that decomposable EWs detect NPT states only, and PPT entangled states can be detected only by non-decomposable EWs. More generally, in Lemma \ref{le:ews-ge} we present the essential relations between EWs and the generalized concepts, namely decomposable matrices and block-positive matrices. Moreover, Theorem \ref{thm:PPTEr4} provides an evidence supporting Conjecture \ref{cj:realew} for two-qutrit PPT entangled states, which suggests that Conjecture \ref{cj:realew} may hold for the entangled states supported on low-dimensional spaces. To move on, we discuss in Lemma \ref{le:cj2-c1} how to locally project an EW to another supported on a lower-dimensional space.



We shall introduce two generalized concepts closely related to EWs. First, a bipartite matrix $W$ is called \emph{decomposable} if it can be written as $W=X^\Gamma+Y$, where $X,Y$ are both positive semidefinite. Second, a bipartite matrix $W$ is called \emph{block-positive} if it is of the form $W:=(id_m\ox\Phi)(X)$ for some positive semidefinite $X\in\cM_m(\bbC)\ox\cM_m(\bbC)$ and some positive map $\Phi:\cM_m(\bbC)\to\cM_n(\bbC)$. We shall elucidate EWs with relation to the two generalized concepts as below.

\begin{lemma}
\label{le:ews-ge}
(i) Every decomposable matrix is block-positive. The converse of this statement is true if and only if $m\times n\leq 6$.

(ii) Every EW is block-positive, and a block-positive matrix is an EW if and only if it is not positive semidefinite.

(iii) Every decomposable matrix is an EW if and only if it is not positive semidefinite. 

(iv) An EW is decomposable if and only if it detects NPT states only. An EW is non-decomposable if and only if it detects both NPT and PPT entangled states. 
\end{lemma}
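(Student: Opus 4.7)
The plan is to prove the four assertions in the order (ii), (i), (iii), (iv), since (i) and (ii) supply the conceptual foundation and the rest cascade from them. First I would unpack block-positivity via its equivalent characterization $\bra{\phi,\psi}W\ket{\phi,\psi}\geq 0$ for every product vector, which by convexity coincides with $\tr(W\rho)\geq 0$ for every separable state $\rho$. With this in hand, (ii) becomes essentially immediate: every EW satisfies the above inequality by definition, so it is block-positive; conversely, a block-positive $W$ fails to be an EW only when it detects no entangled state, and if such a $W$ is not PSD one may pick $\ket{v}$ with $\bra{v}W\ket{v}<0$, whereupon block-positivity forces $\ket{v}$ to be entangled, so $\proj{v}$ is a witnessed state.

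For (i), I would verify that both summands of $W=X^\Gamma+Y$ are block-positive: the case $Y\geq 0$ is trivial, while for $X\geq 0$ one computes $\bra{\phi,\psi}X^\Gamma\ket{\phi,\psi}=\bra{\phi^*,\psi}X\ket{\phi^*,\psi}\geq 0$, and additivity then yields block-positivity of $W$. The converse when $mn\leq 6$ rests on the classical St{\o}rmer--Woronowicz theorem that every positive map between $\cM_2$ and $\cM_2$, or between $\cM_2$ and $\cM_3$, is decomposable; this translates via the Choi--Jamio{\l}kowski isomorphism into the statement that every block-positive matrix on $\bbC^m\ox\bbC^n$ is decomposable exactly when $mn\leq 6$. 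The failure for $mn>6$ is exhibited by well-known non-decomposable positive maps such as the Choi map on $\cM_3$.

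Assertion (iii) is then a one-line corollary of (i) and (ii): any decomposable matrix is block-positive, so it is an EW iff it is not positive semidefinite.

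Finally, for (iv), the easy direction is a direct calculation using Eq.~\eqref{eq:treq-3}: if $W=X^\Gamma+Y$ with $X,Y\geq 0$ and $\rho$ is PPT, then $\tr(W\rho)=\tr(X\rho^\Gamma)+\tr(Y\rho)\geq 0$, so a decomposable EW detects NPT states only. The converse---an EW detecting only NPT states must be decomposable---I would establish via a Hahn--Banach separation argument: the decomposable matrices form a closed convex cone whose dual under the Hilbert--Schmidt pairing is precisely the PPT cone, so if $W$ is non-decomposable there must exist a PPT $\rho$ with $\tr(W\rho)<0$, and this $\rho$ is automatically entangled since $W$ is positive on every separable state. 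The main obstacle I expect is this last step, where one has to identify the dual cone correctly and verify that the separating functional can be normalized to a bona fide PPT entangled state; this is the classical Horodecki-type duality that underlies the very notion of non-decomposable witnesses.
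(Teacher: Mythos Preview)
Your proposal is correct and, in fact, supplies far more than the paper does: the paper does not prove Lemma~\ref{le:ews-ge} at all but simply states that ``the assertions are collected from references'' and points to \cite{inverseew2018}. Your route through the product-vector characterization of block-positivity, the St{\o}rmer--Woronowicz theorem for (i), and the cone-duality/Hahn--Banach argument for (iv) is the standard and correct way to establish these facts.

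The only place the paper adds an explicit argument is immediately after the lemma, where it spells out why a non-decomposable EW also detects an \emph{NPT} state (the ``both'' in assertion (iv)): it takes a PPT entangled $\rho$ detected by $W$, writes $\rho=\sum_j\proj{\psi_j}$, and observes that some $\ket{\psi_j}$ must have $\bra{\psi_j}W\ket{\psi_j}<0$, giving a detected pure (hence NPT) state. You have essentially the same observation embedded in your proof of (ii)---a negative eigenvector of $W$ is forced by block-positivity to be entangled, hence gives a detected pure NPT state---but you do not explicitly invoke it to close out the second sentence of (iv). Making that connection explicit would complete your argument; otherwise there is nothing to correct.
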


The assertions are collected from references. For example, one may refer to Ref. \cite{inverseew2018} for more details. It is worth mentioning that decomposable EWs detect NPT states only, and any non-decomposable EW detects both NPT and PPT entangled states from Lemma \ref{le:ews-ge} (iv). It refines the commonly used statement in literatures that PPT entangled states are detected only by non-decomposable EWs, which does not show the fact clearly that non-decomposable EWs also detect NPT states. One can show that a non-decomposable EW must detect an NPT state as follows. Suppose that $W_p$ is a non-decomposable EW that detects a PPT entangled state $\rho$. Let $\rho=\sum_j\proj{\psi_j}$. Due to $\tr(W_p\rho)<0$, there is one $\ket{\psi_j}$ such that $\bra{\psi_j}W_p\ket{\psi_j}<0$. It implies that $\ket{\psi_j}$ is an entangled state detected by $W_p$. Thus, $W_p$ also detects an NPT state $\ket{\psi_j}$. We also illustrate the relations given in Lemma \ref{le:ews-ge} by Fig. \ref{fig:ews-ge}.

\begin{figure}[htbp]
\centering
\includegraphics[width=0.45\textwidth]{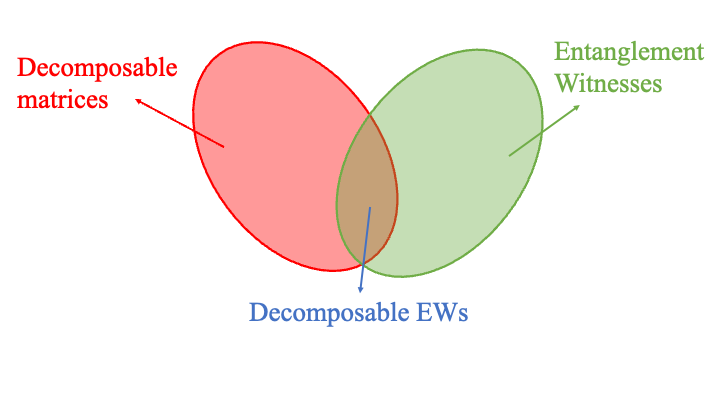}
\caption{Classification of block-positive matrices: The union of the two ovals represents the set of block-positive matrices. The oval in red represents the set of decomposable matrices. The oval in green represents the set of EWs. The intersection of the two ovals represents the set of decomposable EWs.} 
\label{fig:ews-ge}
\end{figure}



Next, we study the local projections of EWs. Note that the local projection of an EW may not be an EW any more. The projection of an EW by applying a product operator is still an EW, if and only if the projection is non-positive semidefinite, as $X\ket{a,b}$ remains a product vector for any product operator $X$ and any product vector $\ket{a,b}$. As it is known that all NPT states supported on $\bbC^2\ox\bbC^n$ are distillable \cite{2ndis2000}, to establish a potential connection with entanglement distillation, we study whether an $m\times n$ EW can be projected to a $2\times n$ EW by local transformations as follows. 

\begin{lemma}
\label{le:cj2-c1}
(i) If a bipartite EW can be locally projected to a $2\times n$ EW, then it can be further locally projected to a two-qubit EW.

(ii) Suppose that $\rho_{AB}$ is a $1$-undistillable NPT state. Then $\rho_{AB}^\Gamma$ cannot be locally projected to a $2\times n$ EW.

(iii) If an EW can be locally projected to a non-decomposable EW detecting some PPT entangled state, then itself must be non-decomposable.
\end{lemma}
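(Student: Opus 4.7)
The plan is to prove the three assertions by exploiting the fundamental observation that any local product projection of an EW remains non-negative on product vectors, so the projection is itself an EW exactly when it fails to be positive semidefinite. Assertions (i) and (ii) then follow from the Schmidt-rank constraint inherent to $2\ox n$ systems, while (iii) follows from a covariance identity for partial transposition.

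For (i), I would pick a pure entangled vector $\ket{\psi}$ with $\langle\psi|W_1|\psi\rangle<0$ for the given $2\times n$ EW $W_1:=(P_A\ox P_B) W (P_A\ox P_B)$. Since $W_1$ is supported on $\bbC^2\ox\bbC^n$, the Schmidt decomposition of $\ket{\psi}$ contains at most two terms. Let $Q_B$ be the rank-two projector onto the span of its Schmidt vectors on $B$, which automatically lies inside the range of $P_B$. Then $(P_A\ox Q_B)\ket{\psi}=\ket{\psi}$, so $(P_A\ox Q_B) W (P_A\ox Q_B)$ still takes the negative value $\langle\psi|W|\psi\rangle$ on $\ket{\psi}$ while remaining non-negative on every product vector, and therefore is a two-qubit EW. For (ii), I would argue by contradiction: if $(P_A\ox P_B)\rho_{AB}^\G(P_A\ox P_B)$ were a $2\times n$ EW, it would detect some $\ket{\psi}$ of Schmidt rank at most two, and the pullback $\ket{\phi}:=(P_A\ox P_B)\ket{\psi}$ would still have Schmidt rank at most two, giving
\begin{equation*}
\langle\phi|\rho_{AB}^\G|\phi\rangle=\langle\psi|(P_A\ox P_B)\rho_{AB}^\G(P_A\ox P_B)|\psi\rangle<0,
\end{equation*}
in direct contradiction with the $1$-undistillability of $\rho_{AB}$, which demands $\langle\phi|\rho_{AB}^\G|\phi\rangle\geq 0$ for every $\ket{\phi}$ of Schmidt rank at most two.

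For (iii), I would argue the contrapositive. Assume $W=X^\G+Y$ is decomposable with $X,Y\geq 0$. The key ingredient is the covariance identity
\begin{equation*}
(P_A\ox P_B) X^\G (P_A\ox P_B)=\bigl[(P_A^*\ox P_B) X (P_A^T\ox P_B)\bigr]^\G,
\end{equation*}
obtained by an elementary index computation together with $P_B=P_B^\dg$. Since $(P_A^*\ox P_B)^\dg=P_A^T\ox P_B$, the bracketed operator equals $(P_A^*\ox P_B) X (P_A^*\ox P_B)^\dg\geq 0$. Therefore the local projection of $W$ has the form $\tilde{X}^\G+\tilde{Y}$ with $\tilde{X},\tilde{Y}\geq 0$, so it is decomposable. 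By Lemma \ref{le:ews-ge}~(iv) a decomposable EW detects only NPT states, contradicting the hypothesis that the projection detects a PPT entangled state. Hence $W$ must be non-decomposable.

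The main obstacle I anticipate is verifying the covariance identity of (iii) cleanly while tracking the complex conjugation introduced by transposition; once this identity is in hand, the positivity of $X$ transfers unchanged through the computation and all three parts close by the elementary block-positivity observation stated at the outset.
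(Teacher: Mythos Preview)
Your proposal is correct and mirrors the paper's proof in all three parts; the only stylistic difference is that the paper phrases ``local projection'' as conjugation by an arbitrary product operator $X=U\otimes V$ rather than by a product of orthogonal projectors, but your arguments extend verbatim to that setting (in (iii) the covariance identity becomes $(U\otimes V)X^{\Gamma}(U\otimes V)^{\dagger}=[(U^{*}\otimes V)X(U^{*}\otimes V)^{\dagger}]^{\Gamma}$, still positive semidefinite). For (ii) the paper reaches the contradiction via this same partial-transpose identity, obtaining $(XWX^{\dagger})^{\Gamma}=(U^{T}\otimes V)\rho_{AB}(U^{T}\otimes V)^{\dagger}$ and invoking that any $2\times n$ local compression of a $1$-undistillable state is PPT, which is equivalent to your direct Schmidt-rank-two characterization.
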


We put the detailed proof of Lemma \ref{le:cj2-c1} into Appendix \ref{sec:proof-3}.
In the proof of Lemma \ref{le:cj2-c1} (i), we show how to project a $2\times n$ pure entangled state to a two-qubit NPT state by a local projector.
In the proof of Lemma \ref{le:cj2-c1} (ii), we prove that such $m\times n$ EWs constructed by the partial transposition of $1$-undistillable NPT states cannot be locally projected to $2\times n$ EWs. 
From Lemma \ref{le:cj2-c1} (iii) we conclude that only non-decomposable EWs are possible to be locally projected to $2\times 4$ EWs which detect PPT entangled states. 

Finally, similar to the approaches given in the proof of Lemma \ref{le:cj2-c1} (i), we propose a method for estimating how small the local dimensions of the spaces supporting the projected NPT states and projected EWs are. 


\begin{lemma}
\label{le:projmindim}
(i) Suppose that $\rho$ is an NPT state. Let $\ket{y}\in\mathbb{C}^{m(y)}\ox\mathbb{C}^{n(y)}$ be in the negative eigenspace of $\rho^\G$. Denote by $p=\min\limits_{\ket{y}}\{m(y),n(y)\}$. Then $\rho$ can be locally projected to an NPT state supported on $\mathbb{C}^{p}\ox\mathbb{C}^{p}$.

(ii) Suppose that $W$ is an EW. Let $\ket{y}\in\mathbb{C}^{m(y)}\ox\mathbb{C}^{n(y)}$ be in the negative eigenspace of $W$. Denote by $p=\min\limits_{\ket{y}}\{m(y),n(y)\}$. Then $W$ can be locally projected to an EW supported on $\mathbb{C}^{p}\ox\mathbb{C}^{p}$.
\end{lemma}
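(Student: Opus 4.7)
The plan is to construct the local projections explicitly from Schmidt decompositions of chosen vectors in the negative eigenspaces, and then verify that the projected matrices retain the relevant spectral properties. I would first handle part (ii) and then adapt the template to part (i) with a correction for the partial transpose.

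For part (ii), since $W$ is an EW it is block-positive by Lemma \ref{le:ews-ge} (ii), so every vector in its negative eigenspace is entangled and has Schmidt rank at least $2$. I would pick $\ket{y}$ in the negative eigenspace realising the minimum Schmidt rank $p$, and write its Schmidt decomposition $\ket{y}=\sum_{i=1}^p\lambda_i\ket{e_i}\ox\ket{f_i}$ with $\lambda_i>0$. Define the local partial isometries $M_A=\sum_{i=1}^p\ket{i}\bra{e_i}$ and $M_B=\sum_{i=1}^p\ket{i}\bra{f_i}$, which compress the local supports of $\ket{y}$ onto $\bbC^p$, and set $W':=(M_A\ox M_B)W(M_A\ox M_B)^\dg$. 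Two verifications suffice. \emph{Block-positivity}: for any product $\ket{a,b}\in\bbC^p\ox\bbC^p$, one has $\bra{a,b}W'\ket{a,b}=\bra{a',b'}W\ket{a',b'}$ where $\ket{a'}:=M_A^\dg\ket{a}$ and $\ket{b'}:=M_B^\dg\ket{b}$ are still product, so this is non-negative by block-positivity of $W$. \emph{Not positive semidefinite}: for $\ket{y'}:=\sum_{i=1}^p\lambda_i\ket{i,i}$ direct computation gives $(M_A\ox M_B)^\dg\ket{y'}=\ket{y}$, hence $\bra{y'}W'\ket{y'}=\bra{y}W\ket{y}<0$. By Lemma \ref{le:ews-ge} (ii), $W'$ is an EW on $\bbC^p\ox\bbC^p$.

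For part (i), I would use the key identity
\begin{equation*}
\big[(M_A\ox M_B)\rho(M_A\ox M_B)^\dg\big]^\G=(M_A^*\ox M_B)\rho^\G(M_A^*\ox M_B)^\dg,
\end{equation*}
which is verified by a direct index-level calculation of the partial transpose. This forces a complex conjugation on the $A$-side into the choice of projection: picking $\ket{y}$ in the negative eigenspace of $\rho^\G$ with minimum Schmidt rank $p$ and Schmidt decomposition $\ket{y}=\sum_{i=1}^p\lambda_i\ket{e_i}\ox\ket{f_i}$, I would set $M_A=\sum_{i=1}^p\ket{i}\bra{e_i^*}$ and $M_B=\sum_{i=1}^p\ket{i}\bra{f_i}$, designed precisely so that $(M_A^*\ox M_B)^\dg\ket{y'}=(M_A^T\ox M_B^\dg)\ket{y'}=\ket{y}$ for $\ket{y'}=\sum_{i=1}^p\lambda_i\ket{i,i}$. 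Then $(M_A\ox M_B)\rho(M_A\ox M_B)^\dg$ is positive semidefinite since $\rho$ is, and by the identity above its partial transpose evaluated on $\ket{y'}$ equals $\bra{y}\rho^\G\ket{y}<0$. Hence, after normalisation, the projection is an NPT state on $\bbC^p\ox\bbC^p$.

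The main obstacle will be the partial-transpose bookkeeping in part (i): getting the complex conjugation in $M_A$ aligned so that the negative expectation of $\rho^\G$ on $\ket{y}$ is faithfully transported to a negative expectation on $\ket{y'}$ in the compressed space. Once this identity is established, both parts follow from the same template and the minimality of $p$ is automatic by choosing $\ket{y}$ to realise the minimum Schmidt rank in its respective negative eigenspace.
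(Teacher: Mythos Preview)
Your proof is correct and follows essentially the same strategy as the paper: choose a vector of minimum Schmidt rank in the negative eigenspace, compress locally onto its Schmidt supports, and verify that block-positivity and the negative expectation value survive. The only cosmetic difference is that you use a single pair of partial isometries $M_A,M_B$, whereas the paper factors this as a product unitary $X=U\ox V$ followed by the projector $P=I_p\ox I_p$; your explicit handling of the partial-transpose identity (with the complex conjugation on the $A$-side) is the same step the paper encodes in the notation $X^\G$.
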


One may refer to Appendix \ref{sec:proof-3} for the detailed proof of Lemma \ref{le:projmindim}. Note that the smaller local dimensions of the supporting space derived in Lemma \ref{le:projmindim} may not be the minimum ones.

\section{Conclusions}
\label{sec:con}

In this paper we revealed the differences in detection power between REWs and CEWs, and analyzed the detection power of REWs under local equivalences, especially LU equivalence. First, we have explained that an REW must detect a real and entangled state, and conversely a real and entangled state must be detected by some REW. We also presented a necessary and sufficient condition for the entangled states detected by REWs. This condition completely relies on the separability of their real parts. With this condition we confirmed the existence of the entangled states which cannot be detected by any REW. It clearly indicates the detection limitations by using REWs only. For this fact, we further investigated which entangled states can be detected by the EWs locally equivalent to some REWs, i.e. Conjecture \ref{cj:realew}. We proved Conjecture \ref{cj:realew} for all NPT states, and presented a necessary and sufficient condition for the complex and PPT entangled states that can be detected by the EWs in $\cE^{LU}$. We further proved Conjecture \ref{cj:realew} for a family of two-qutrit PPT entangled states, and constructed a $4\times 4$ PPT entangled state of a complex density matrix satisfying the criterion above. Another way to attack Conjecture \ref{cj:realew} is to examine the existence of a counterexample. We proposed an equivalent method to examine the existence from a set-theoretic perspective. Using this method, the results do not suggest the existence of a counterexample. Based on the supporting evidence, we believe that Conjecture \ref{cj:realew} may hold at least for low-dimensional states. To lower the local dimensions of the systems, we finally investigated if it is possible to locally project an EW to another supported on the lower-dimensional space.

The main target of future work is to generally prove Conjecture \ref{cj:realew} as the local dimensions increase, or to numerically test the counterexamples by proposing conditions that are easy to verify. As byproducts, there are several interesting directions to move on. First, we may build more essential connections between Conjecture \ref{cj:realew} (i) and (ii). Second, we may study more deeply the inclusion relation between the two sets $\cP_{rs}(m,n)$ and $\cS(m,n)$, because such relations are connected to the validity of Conjecture \ref{cj:realew} by our results. Third, to link entangled states and EWs to those supported on lower dimensional spaces, we may further investigate their local projections.
Finally, the connections and differences in terms of physical implications between REWs and CEWs are not evident yet, while the physical implications are instructive to characterize and detect entanglement.

\section*{acknowledgements}

The authors appreciate the anonymous reviewers for helping improve the quality of this paper. Y.S. is funded by the NNSF of China (Grant No. 12401597), the Basic Research Program of Jiangsu (Grant No. BK20241603), and the Wuxi Science and Technology Development Fund Project (Grant No. K20231008). L.C. is supported by the NNSF of China (Grant No. 12471427).
Z.B. was supported by the NNSF of China (Grant No. 12104186).

\bibliography{witness}


\appendix

\section{Proofs of some results in Sec. \ref{sec:limit}}
\label{sec:proof-1}

\textbf{Proof of Lemma \ref{le:ew_rho}.}
(i) If $W_1,W_2\in\cE_{\rho}$ which means that $\tr(W_1\rho)<0$ and $\tr(W_2\rho)<0$, then we obtain for any $t\in[0,1]$,
\beq 
\label{eq:ew_rho-1}
\tr\big((tW_1+(1-t)W_2)\rho\big)=t\tr(W_1\rho)+(1-t)\tr(W_2\rho)<0.
\eeq
It means that $tW_1+(1-t)W_2\in\cE_{\rho}$.

(ii) This fact follows from the equality below:
\beq
\label{eq:eq_rho-2}
\tr(W\rho)=\tr(\rho^T W^T)=\tr(\rho^* W^*).
\eeq
If $W$ is an EW which detects the entangled state $\rho$, it follows from Eq. \eqref{eq:eq_rho-2} that $\tr(W\rho)=\tr(\rho^* W^*)<0$. By the definition of EW, we also conclude that, for any product state $\ket{a,b}$,
\beq 
\label{eq:w*-prod}
\bra{a,b}W^*\ket{a,b}=\bra{a^*,b^*} W\ket{a^*,b^*}\geq 0.
\eeq
It follows from Eqs. \eqref{eq:eq_rho-2} and \eqref{eq:w*-prod} that $\rho^*$ is entangled if $\rho$ is entangled, and  $W^*$ is an EW detecting $\rho^*$ if and only if $W$ is an EW detecting $\rho$. Furthermore, if $\cE_\rho$ contains an REW $W$, then $W$ is also included in $\cE_{\rho^*}$ by the above assertion, i.e. $W\in\cE_\rho\cap\cE_{\rho^*}$. Next, if there exists an EW $W\in\cE_\rho\cap\cE_{\rho^*}$, it follows that $\tr(W\rho)<0$ and $\tr(W\rho^*)<0$. Based on $\tr(W\rho^*)=\tr(\rho W^*)=\tr(W^*\rho)<0$, we conclude that $\tr(W^+\rho)=\frac{1}{2}(\tr(W\rho)+\tr(W^*\rho))<0$. Due to Eq. \eqref{eq:w*-prod} we also conclude that $\bra{a,b}W^+\ket{a,b}\geq 0$, for any product state $\ket{a,b}$. It follows that $W^+$ is an REW which detects the entangled state $\rho$, i.e $W^+\in\cE_\rho$.

(iii) According to Lemma \ref{le:ewpt}, for an EW $W$, $W^\G$ is an NPT state if $W^\G$ is positive semidefinite, and otherwsie $W^\G$ is an EW. Based on this fact, this assertion follows from the equality below:
\beq
\label{eq:eq_rho-3}
\tr(W^\G\rho^\G)=\tr(W(\rho^\G)^\G)=\tr(W\rho).
\eeq
Since $\rho$ is PPT entangled, one can verify that $\rho^\Gamma$ is also a PPT entangled state. If $W\in\cE_\rho$, we equivalently obtain that $\tr(W^\G\rho^\G)=\tr(W\rho)<0$, from Eq. \eqref{eq:eq_rho-3}. Then $W^\Gamma$ is non-positive semidefinite and has to be an EW by Lemma \ref{le:ewpt}. Thus, $W\in\cE_\rho$ is equivalent to $W^\Gamma\in\cE_{\rho^\Gamma}$.

(iv) This fact follows from the equality below:
\beq
\label{eq:eq_rho-5}
\tr(W\rho)+\tr(W\rho^*)=2\tr(W\rho^+).
\eeq
Since $\rho^+$ is separable, we conclude that $\tr(W\rho^+)\geq 0$ for any EW $W$, which means $\tr(W\rho)+\tr(W\rho^*)\geq 0$. 

(v) This fact follows from the equality below:
\beq
\label{eq:eq_rho-4}
\bal
&\tr((U\ox V)W(U\ox V)^\dg\sigma) \\
=&\tr((U\ox V)W(U\ox V)^\dg(U\ox V)\rho(U\ox V)^\dg) \\
=&\tr(W\rho).
\eal
\eeq
The equality \eqref{eq:eq_rho-4} shows that $\rho$ is detected by an EW $W$ if and only if $\sigma:=(U\ox V)\rho(U\ox V)^\dg$ is detected by the EW as $(U\ox V)W(U\ox V)^\dg$.

This completes the proof.
\qed

\textbf{Proof of Lemma \ref{le:WandW+}.}
(i) Since $W$ is an EW, for any $t\in[0,1]$ and any product state $\ket{a,b}$, it follows that
\beq
\bal
\label{eq:W_t-1}
&\bra{a,b}\left(tW+(1-t)W^*\right)\ket{a,b} \\
=&t\bra{a,b}W\ket{a,b}+(1-t)\bra{a,b}W^*\ket{a,b} \\
=&t\bra{a,b}W\ket{a,b}+(1-t)\bra{a^*,b^*}W\ket{a^*,b^*}\geq 0.
\eal
\eeq
Thus, by definition we conclude that $W_t$ is an EW if and only if it is non-positive semidefinite. 

(ii) First, we prove the ``If'' part. Suppose that $W$ detects an entangled state $\rho_t:=t\rho+(1-t)\rho^*$ for some state $\rho$. It follows from $\tr(W\rho^*)=\tr(W^*\rho)$ that
\beq
\label{eq:W_t-2}
0>\tr(W\rho_t)=t\tr(W\rho)+(1-t)\tr(W\rho^*)
=\tr(W_t\rho).
\eeq
Then we conclude from assertion (i) that $W_t$ is an EW, and $\rho$ is an entangled state detected by $W_t$.
Second, we prove the ``Only if'' part. Suppose that $W_t$ detects an entangled state $\rho$. It implies that $\tr(W_t\rho)<0$. According to Eq. \eqref{eq:W_t-2}, we conclude that $\tr(W\rho_t)=\tr(W_t\rho)<0$, and thus $W$ detects an entangled state as $\rho_t:=t\rho+(1-t)\rho^*$, where $\rho$ is detected by $W_t$.

This completes the proof.
\qed

\textbf{Proof of Corollary \ref{cor:W+}.}
(i) This assertion follows directly from Lemma \ref{le:WandW+} (ii), by assigning the coefficient $t$ of $W_t$ as $\frac{1}{2}$.

(ii) According to Lemma \ref{le:WandW+}, $W^+$ is a state if and only if $W$ detects no real and entangled state. Next, we consider when $W^+$ becomes a PPT state. It suffices to verify the semidefinite positivity of $(W^+)^\Gamma$. By the equality $(W^+)^\Gamma=(W^\Gamma)^+$ from Eq. \eqref{eq:treq-2}, we may take $(W^+)^\Gamma$ as the real part of $W^\Gamma$. It follows from Lemma \ref{le:ewpt} that $W^\Gamma$ is either an NPT state or an EW. First, when $W^\Gamma$ is an NPT state, the real part of $W^\Gamma$, i.e. $(W^+)^\Gamma$ is positive semidefinite, and thus $W^+$ is a PPT state. Second, when $W^\Gamma$ is an EW, it follows from Lemma \ref{le:WandW+} and assertion (i) that $(W^\Gamma)^+$ is positive semidefinite if and only if $W^\Gamma$ detects no real and entangled state. To sum up, we obtain the assertion (ii).

(iii) Under the precondition that $W^+$ is a state, $W^+$ is either a PPT one or an NPT one. Thus, by virtue of the assertion (ii), we obtain the assertion (iii). For the last statement, it follows from Lemma \ref{le:ew_rho} (iii) that $\rho$ is detected by $W^\Gamma$ if and only if $\rho^\Gamma$ is detected by $W$, for any PPT entangled $\rho$. However, this condition contradicts that $W$ detects no real and entangled state. Therefore, the real and entangled states detected by $W^\Gamma$ must be NPT.

This completes the proof.
\qed

\section{Proofs of some results in Sec. \ref{sec:ppt+sepr}}
\label{sec:proof-2}

\textbf{Proof of Lemma \ref{le:realpartspace}.}
(i) Since $\rho_A^+$ is of rank $p$, there is a unitary $U$ such that $U\rho_A^+ U^\dg=\diag(a_1,\cdots,a_p,0,\cdots,0)$. Similarly, there is a unitary $V$ such that $V\rho_B^+ V^\dg=\diag(b_1,\cdots,b_q,0,\cdots,0)$, as $\rho_B^+$ is of rank $q$. Thus, up to LU equivalence, $\rho_{AB}^+$ is indeed supported on $\bbC^p\ox\bbC^q$, and can be assumed as a block matrix: 
\begin{equation}
    \label{eq:rp-1}
\rho_{AB}^+ = 
\begin{bmatrix}
    M_{11} & M_{12} & \cdots & M_{1p} & O \\
    M_{12}^\dg & M_{22} & \cdots & M_{2p} & O \\
    \vdots & \vdots & \cdots & \vdots & \vdots \\
    M_{1p}^\dg & M_{2p}^\dg & \cdots & M_{pp} & O \\
    O & O & \cdots & O & O 
\end{bmatrix},
\end{equation}
where each $O$ is a zero matrix with proper rows and columns, and each $M_{ij}$ is an $n\times n$ matrix in the form as
\begin{equation}
    \label{eq:rp-1.1}
M_{ij} = 
\begin{bmatrix}
   M_{ij}' & O_{q\times (n-q)} \\
    O_{(n-q)\times q} & O_{(n-q)\times (n-q)}
\end{bmatrix}.
\end{equation}
Since $\rho_{AB}^-$ is real and skew-symmetric, it implies that $(\rho_{AB}^-)^T=-\rho_{AB}^-$ and all diagonal entries are zero. One can verify that for any positive semidefinite matrix, if the diagonal entry $(i,i)$ is zero, then all entries in the $i$-th row and the $i$-th column have to be zero. Thus, to ensure that $\rho_{AB}=\rho_{AB}^+ + i \rho_{AB}^-$ is positive semidefinite, we conclude that $\rho_{AB}$ is also supported on $\bbC^p\ox\bbC^q$ based on the two forms given by Eqs. \eqref{eq:rp-1} and \eqref{eq:rp-1.1}.

(ii) We first show the assertion (ii.a). Since $\rho_{AB}^-$ is real and skew-symmetric, which implies that the diagonal entries of $\rho_{AB}^-$ are all zero, then the $(k,k)$ entry of $\rho_{jj}$ is zero if $p_{j,k}=0$. On the one hand, to ensure that $\rho_{AB}=\sum_{x,y=1}^m \ketbra{x}{y}\ox \rho_{xy}$ is positive semidefinite, one can verify that the $k$-th rows of $\rho_{j1},\rho_{j2},\cdots,\rho_{jm}$ are all filled with zeros, and the $k$-th columns of $\rho_{1j},\rho_{2j},\cdots,\rho_{mj}$ are all filled with zeros. On the other hand, to ensure that $\rho_{AB}^\Gamma=\sum_{x,y=1}^m \ketbra{y}{x}\ox \rho_{xy}$ is also positive semidefinite due to the assumption that $\rho_{AB}$ is a PPT state, one can verify that $k$-th columns of $\rho_{j1},\rho_{j2},\cdots,\rho_{jm}$ are all filled with zeros, and the $k$-th rows of $\rho_{1j},\rho_{2j},\cdots,\rho_{mj}$ are all filled with zeros. Then the assertion (ii.a) holds.

Second, we show the assertion (ii.b). If 
$$k\in\{k_1,\cdots,k_s\}\cup\{l_1,\cdots,l_t\},$$
then we have $p_{x,k}=0$ or $p_{y,k}=0$ by the assumption. Thus, according to the assertion (ii.a), we conclude that both the $k$-th rows and the $k$-columns of $\rho_{xy}$ and $\rho_{yx}$ are filled with zeros.

Third, we prove the assertion (ii.c). Since any permutation has a unitary representation, then we may assume that $\rho_{AB}^+=\sum_{j=1}^m p_{j,j}\proj{j,j}$ up to LU equivalence.
According to assertion (ii.a), one can verify that $\rho_{AB}$ has the same diagonal line as $\rho_{AB}^+$, and the non-zero elements of $\rho_{AB}$ which are not diagonal entries could only appear in the $(u,v)$ entries of $\rho_{xy}$ and the $(v,u)$ entries of $\rho_{yx}$, if $p_{x,u}\cdot p_{y,v}>0,~\forall 1\leq x,y\leq m$. Based on this observation, we conclude that the non-diagonal entries of $\rho_{AB}$ are all zeros, as $\rho_{AB}^\Gamma$ is also positive semidefinite. It follows that $\rho_{AB}=\rho_{AB}^+$. So $\rho_{AB}$ is a real and separable state.

This completes the proof.
\qed

\textbf{Proof of Lemma \ref{le:prs-prop}.}
(i) Suppose that $\alpha_{AB},\beta_{AB}$ both are PPT states in $\cP_{rs}(m,n)$. By definition for any $\tilde{\alpha}_{AB}\sim_{LU}\alpha_{AB}$ and any $\tilde{\beta}_{AB}\sim_{LU}\beta_{AB}$, the real parts of $\tilde{\alpha}_{AB}$ and $\tilde{\beta}_{AB}$ are always separable. 
First we show that $\cP_{rs}(m,n)$ is convex. One can verify that $t\alpha_{AB}+(1-t)\beta_{AB}$ is also a PPT state for any $t\in[0,1]$. Then, for any LU $U\ox V$ and any $t\in [0,1]$, we obtain that
\beq
\label{eq:prs-prop-1}
\begin{aligned}
&\left[(U\ox V)(t\alpha_{AB}+(1-t)\beta_{AB})(U\ox V)^\dg\right]^+ \\
=&t[(U\ox V)\alpha_{AB}(U\ox V)^\dg]^+ \\
&+(1-t)[(U\ox V)\beta_{AB}(U\ox V)^\dg]^+ \\
=&t\tilde{\alpha}_{AB}^+ + (1-t)\tilde{\beta}_{AB}^+ .
\end{aligned}
\eeq
Hence, we conclude that any state LU equivalent to $(t\alpha_{AB}+(1-t)\beta_{AB})$ has a separable real part, which implies that $(t\alpha_{AB}+(1-t)\beta_{AB})\in\cP_{rs}(m,n)$. It follows that $\cP_{rs}(m,n)$ is convex.
Second, we show that $\cP_{rs}(m,n)$ is a closed set. Denote by $\cD(m,n)$ the set of all bipartite states supported on $\bbC^m\ox\bbC^n$. It is known that $\cD(m,n)$ is closed and convex, and obviously $\cP_{r,s}(m,n)\subset\cD(m,n)$. Denote $\cD(m,n)\backslash\cP_{rs}(m,n)$ as $\cP^c_{rs}(m,n)$. Thus, the statement that $\cP_{r,s}(m,n)$ is closed is equivalent to that $\cP^c_{rs}(m,n)$ is an open set. Next, we shall equivalently show that $\cP^c_{rs}(m,n)$ is open. For any $\rho_{AB}\in\cP^c_{rs}(m,n)$, it follows from assertion (i) that $\rho_{AB}$ is entangled, and from the definition of $\cP_{rs}(m,n)$ that there exists an LU $U\ox V$ such that the real part of $(U\ox V)\rho_{AB}(U\ox V)^\dg$ is entangled. It follows from Theorem \ref{le:realEW} (ii) that $(U\ox V)\rho_{AB}(U\ox V)^\dg$ is detected by an REW, namely $W_r$. That is,
\begin{equation}
    \label{eq:qrs-closed-1}
\bal
&\tr\left((U\ox V)\rho_{AB}(U\ox V)^\dg W_r\right) \\
=&\tr\left(\rho_{AB}(U\ox V)^\dg W_r(U\ox V)\right)<0.
\eal
\end{equation}
First, since $\rho_{AB}$ is entangled, there exists a small enough neighborhood $U(\rho_{AB};\epsilon_1)$ such that any state in $U(\rho_{AB};\epsilon_1)$ is entangled. Note that $U(\rho_{AB};\epsilon_1)$ represents a set, any state in which is distant to $\rho_{AB}$ less than $\epsilon_1$ measured by the trace norm. Second, due to Eq. \eqref{eq:qrs-closed-1}, there exists another small enough neighborhood $U(\rho_{AB};\epsilon_2)$ such that for any $\alpha_{AB}\in U(\rho_{AB};\epsilon_2)$ the following inequality holds:
\begin{equation}
    \label{eq:qrs-closed-2}
\bal
&\tr\left((U\ox V)\alpha_{AB}(U\ox V)^\dg W_r\right) \\
=&\tr\left(\alpha_{AB}(U\ox V)^\dg W_r(U\ox V)\right)<0.
\eal
\end{equation}
It follows that the neighborhood $U(\rho_{AB};\epsilon_2)$ is included in $\cP^c_{rs}(m,n)$ by definition.
Let $\epsilon=\min\{\e_1,\e_2\}$. Therefore, we conclude that for any $\rho_{AB}\in\cP^c_{rs}(m,n)$, there always exists a small enough neighborhood $U(\rho_{AB};\epsilon)$ which is strictly included in $\cP^c_{rs}(m,n)$. It implies that the set $\cP^c_{rs}(m,n)$ is open by the knowledge of functional analysis, and equivalently the set $\cP_{r,s}(m,n)$ is closed.

(ii) Assume that $\rho_{AB}\in\cP_{rs}(m,n)$. The states of interest in the following assertions are naturally PPT states. It remains to consider whether the corresponding real parts are always separable.
First, one can verify that $\sigma_{AB}\sim\rho_{AB}$ is equivalent to $\sigma_{AB}^*\sim\rho_{AB}^*$. For any $\sigma^*_{AB}\sim\rho^*_{AB}$, we equivalently obtain that $\sigma_{AB}=(\sigma_{AB}^*)^*\sim\rho_{AB}$, and thus $\sigma_{AB}^+$ is separable by definition. Since $(\sigma^*)^+=\frac{1}{2}(\sigma^*+\sigma)=\sigma^+$, we obtain that $(\sigma_{AB}^*)^+$ is also separable as $\sigma_{AB}^+$ is separable. It implies that assertion (ii.a) holds by definition.
Second, assume that $\tilde{\rho}_{AB}\sim\rho_{AB}$. Then for any $\sigma_{AB}\sim\tilde{\rho}_{AB}$, it follows that $\sigma_{AB}\sim\rho_{AB}$. Since $\rho_{AB}\in\cP_{rs}(m,n)$, we obtain that $\sigma_{AB}^+$ is separable by definition. Due to $\sigma_{AB}\sim\tilde{\rho}_{AB}$, it implies that $\tilde{\rho}_{AB}\in\cP_{rs}(m,n)$, i.e. assertion (ii.b) holds. Third, for any $\sigma_{AB}\sim\rho_{AB}^\Gamma$, we have $\sigma_{AB}^\Gamma\sim\rho_{AB}$. Since $\rho_{AB}\in\cP_{rs}(m,n)$, by definition it follows that $(\sigma_{AB}^\Gamma)^+$ is separable. For any bipartite state $\alpha_{AB}$ we claim that $(\alpha_{AB}^\Gamma)^+=(\alpha_{AB}^+)^\Gamma$ for the following reason.
\beq
\label{eq:Prs-1}
\bal
(\alpha_{AB}^\Gamma)^+&=\frac{1}{2}(\alpha_{AB}^\Gamma+(\alpha_{AB}^\Gamma)^*)=\frac{1}{2}(\alpha_{AB}+\alpha_{AB}^*)^\Gamma \\
&=(\alpha_{AB}^+)^\Gamma.
\eal
\eeq
Due to Eq. \eqref{eq:Prs-1}, we conclude that $(\sigma_{AB}^+)^\Gamma$ is separable as $(\sigma_{AB}^\Gamma)^+$ is separable. Hence, we obtain that $\sigma_{AB}^+$ is separable. By definition it implies that assertion (ii.c) holds.
Fourth, $\rho_{BA}$ is obtained by applying a swap operation on $\rho_{AB}$, and thus $\rho_{BA}$ is supported on $\bbC^n\ox\bbC^m$. For any $\sigma_{BA}\sim\rho_{BA}$, we have $\sigma_{AB}\sim\rho_{AB}$. One can also verify that $\sigma_{BA}^+$ is from swapping two systems of $\sigma_{AB}^+$. By definition it follows from $\rho_{AB}\in\cP_{rs}(m,n)$ that $\sigma_{AB}^+$ is separable. Thus, we conclude that $\sigma_{BA}^+$ is separable for any $\sigma_{BA}\sim\rho_{BA}$. It means that $\rho_{BA}\in\cP_{rs}(n,m)$, i.e. assertion (ii.d) holds. 

This completes the proof.
\qed

\section{Proofs of some results in Sec. \ref{subsec:cj2}}
\label{sec:proof-3}

\textbf{Proof of Lemma \ref{le:cj2-c1}.}
(i) Suppose that $W$ is an $m\times n$ EW which can be projected to a $2\times n$ EW by a product matrix $X=U\ox V$ where $U$ is of Schmidt rank two. Then $XWX^\dg$ detects some entangled state $\alpha$ supported on $\bbC^2\ox\bbC^n$. According to the spectral decomposition of $\alpha$, we determine that $XWX^\dg$ must detect some pure entangled state $\ket{\psi}\in\bbC^2\ox\bbC^n$, where $\ket{\psi}\in\cR(\alpha)$. Let $\ket{\psi}=x_1\ket{a_1,b_1}+x_2\ket{a_2,b_2}$ be the Schmidt decomposition, where $x_1,x_2>0$. There exists a local unitary $U$ and the projector $P=(\proj{0}+\proj{1})\ox(\proj{0}+\proj{1})$ such that $PU\ket{\psi}=x_1\ket{0,0}+x_2\ket{1,1}$. Denote $\tilde{W}=PUXWX^\dg U^\dg P$. It follows that
\beq
\label{eq:cj2-1}
\bal
&(x_1\bra{0,0}+x_2\bra{1,1})\tilde{W}(x_1\ket{0,0}+x_2\ket{1,1})\\
=&\bra{\psi}XWX^\dg\ket{\psi}<0.
\eal
\eeq
Thus, the projected $\tilde{W}$ is a two-qubit EW which detects the two-qubit entangled pure state $x_1\ket{0,0}+x_2\ket{1,1}$.

(ii) Let $\rho_{AB}$ be an $m\times n$ and $1$-undistillable NPT state. By the definition of $1$-undistillable states, for any product matrix $X=U\ox V$ with Schmidt-rank-two matrix $U$, the projected state $X\rho_{AB}X^\dg$ can only be a $2\times n$ PPT state. Here we suppose that $W=\rho_{AB}^\G$ which is an EW according to Choi isomorphism. Then we obtain that
\beq
\label{eq:cj2-2}
(XWX^\dg)^\G\equiv (U^T\ox V)\rho_{AB} (U^T\ox V)^\dg
\eeq
must be a PPT state, as $\rho_{AB}$ is $1$-undistillable. Since $(XWX^\dg)^\G$ is a PPT state, it follows that $XWX^\dg$ is positive semidefinite and thus no longer an EW.

(iii) As we know, an EW detects PPT entangled states if and only if it is non-decomposable. We also claim that if an EW detects some PPT entangled state, then it must detect both PPT entangled and NPT states for the following reason. Suppose that a non-decomposable EW $W_P$ detects a PPT entangled state $\rho$. It follows that $\cR(\rho)$ is not spanned by product vectors, and $W_P$ detects at least one pure entangled state in $\cR(\rho)$ due to $\tr(W_P\rho)<0$. Any pure entangled state is NPT and thus $W$ detects both PPT entangled and NPT states.
Next, we prove the assertion by contradiction. Assume that $W$ can only detect NPT states and $XWX^\dg$ detects both PPT entangled and NPT states for some product matrix $X=U\ox V$. Let $W=P+Q^\G$ for some positive semidefinite $P,Q$. Then we obtain
\beq 
\label{eq:decewproj}
\bal
XWX^\dg&=XPX^\dg+XQ^\Gamma X^\dg\\
&=XPX^\dg+(U\ox V)Q^\Gamma(U^\dg\ox V^\dg)\\
&=XPX^\dg+[(U^*\ox V^\dg)Q(U^T\ox V)]^\Gamma.
\eal
\eeq
It follows that both $XPX^\dg$ and $(U^*\ox V^\dg)Q(U^T\ox V)$ are positive semidefinite, which implies that $XWX^\dg$ remains decomposable, and thus the projection $XWX^\dg$ can only detect NPT states. Then we obtain a contradiction, and conclude that $W$ should be non-decomposable and can detect both PPT entangled and NPT states.

This completes the proof. 
\qed

\textbf{Proof of Lemma \ref{le:projmindim}.}
Denote by $m(y^*),n(y^*)$ the two local dimensions of $\ket{y^*}$ and assume that $p\equiv \min\{m(y^*),n(y^*)\}=\min\limits_{\ket{y}}\{m(y),n(y)\}$. Then the Schmidt decomposition of $\ket{y^*}$ reads as $\ket{y^*}=\sum_{j=0}^{p-1}\la_j\ket{a_j,b_j}$ where $\forall \la_j>0$. There exists a local unitary operator $X=U\ox V$ and the projector $P=I_p\ox I_p$ such that $PX\ket{y^*}=\sum_{j=0}^{p-1}\la_j \ket{j,j}\in\mathbb C^p\ox\mathbb C^p$. 

(i) Since $\ket{y^*}$ is in the negative eigenspace of $\rho^\G$, we obtain that $\bra{y^*}\rho^\G\ket{y^*}<0$. It implies that $PX\rho^\G X^\dg P$ has a negative eigenvalue. It follows that 
\beq  
\label{eq:projmindim-1}
\left(P(X^\G)^\dg\rho X^\G P\right)^\G=PX\rho^\G X^\dg P.
\eeq
It implies that $\rho$ can be locally projected to an NPT state $P(X^\G)^\dg\rho X^\G P$ supported on $\mathbb C^p\ox\mathbb C^p$.

(ii) The proof is similar to that of assertion (i).  First, it obvious that $\bra{a,b}PXWX^\dg P\ket{a,b}\geq 0$ for any product state $\ket{a,b}$. Second, for the similar reason we conclude that $PXWX^\dg P$ has a negative eigenvalue as $W$ detects the pure entangled state $\ket{y^*}$. By definition $PXWX^\dg P$ is an EW supported on $\mathbb C^p\ox\mathbb C^p$.

This completes the proof.
\qed

\end{document}